\documentclass[a4paper,titlepage]{article}

\title{Population Structures with Positive Feedback and Asymmetric Division 
\thanks{Submitted  {\textcolor{red}{April 12, 2026}}.}}
\date{April 12, 2026}

\author{Gabriel Dooley\thanks{Mathematics, Ohio University, Athens, OH 45701, gd957015@ohio.edu}
\and Camden Kilton\thanks{Mathematics and Physics \& Astrophysics, Ohio University, ck003122@ohio.edu} 
\and Brynley Needham\thanks{Mathematics, Ohio University, bn924722@ohio.edu}    
\and Graham Walther\thanks{Mathematics, Ohio University, gw514022@ohio.edu}
\and Todd R.\ Young\thanks{Mathematics, Ohio University, youngt@ohio.edu}
}

\usepackage{enumitem}
\setlist[enumerate]{leftmargin=.5in}
\setlist[itemize]{leftmargin=.5in}

\usepackage[mathlines]{lineno}

\usepackage{amsthm}
\newtheorem{theorem}{Theorem}[section]   
\newtheorem{lemma}[theorem]{Lemma}       

\newtheorem{corollary}[theorem]{Corollary}

\theoremstyle{definition}
\newtheorem{definition}[theorem]{Definition}

\newenvironment{keywords}{
    \par\vspace{1em}%
    \noindent\textbf{Keywords:} \itshape
}{\par\vspace{1em}}

\usepackage{amsmath}

\usepackage{cleveref}
\usepackage{amsfonts}

\usepackage{graphicx}
\graphicspath{{./figures/}}
\graphicspath{{figures/}} 

\usepackage{amsopn}

\usepackage{caption}

\usepackage[square,numbers]{natbib}

\usepackage{listings}
\usepackage{tikz}
\usetikzlibrary{decorations.markings}
\usetikzlibrary{calc,math}

\usepackage{acronym}
\newacro{PCS}{Perfectly Clustered Solution}
\newacro{CDC}{Cell Division Cycle}
\newacro{YMO}{Yeast Metabolic Oscillations}

\newcommand{\xb}{\mathbf{x}}
\newcommand{\Xb}{\mathbf{X}}

\begin{document}

\maketitle

\begin{abstract}
    In bioreactor experiments, budding yeast can manifest stable metabolic oscillations. 
    In some instances these oscillations involve the cell cycle 
    and are associated with a dynamical phenomenon called temporal clustering. 
     Since yeast divide asymmetrically, mother cells may be able to divide sooner 
     than the smaller daughter cells. 
     We show that asymmetric division plus positive feedback   
    can result in stable temporal clustering. 
    In this scenario, the cells self-organize into $p$ clusters of mother cells 
    and $q$ clusters of daughter cells, $p \le q$. 
    In simple numerical simulations of a population model with asymmetric division
    and positive feedback, we show that $p:q$ periodic arrangements form spontaneously 
    from random populations of cells. Our main result is that these structures
    can be stable.
    \\
\end{abstract}

\begin{keywords}
Yeast Autonomous Oscillations, Temporal Clustering, Phase Synchrony
\end{keywords}

\tableofcontents


\section{Introduction}



Temporal clustering is a phenomenon in which a
population of oscillators, cells, or other agents
become synchronized (or nearly so) in cohorts, with different
cohorts out of phase with each other (see \cite{orosz07,mauroy08,fernandez14}).  We distinguish
this from {\em synchrony} in which the phases of
the entire population are the same.
Although synchrony is common in nature and has been widely studied \cite{strogatz},
very few research articles have considered temporal clustering. 
Our main goal is to fill this gap in an application where it is relevant. 
A snapshot of a temporally clustered solution in the model we will
study is shown in \cref{fig:23plot}.
\begin{figure}[h!] 
\begin{center}
  \includegraphics[width=0.49\textwidth, height=4.0cm]{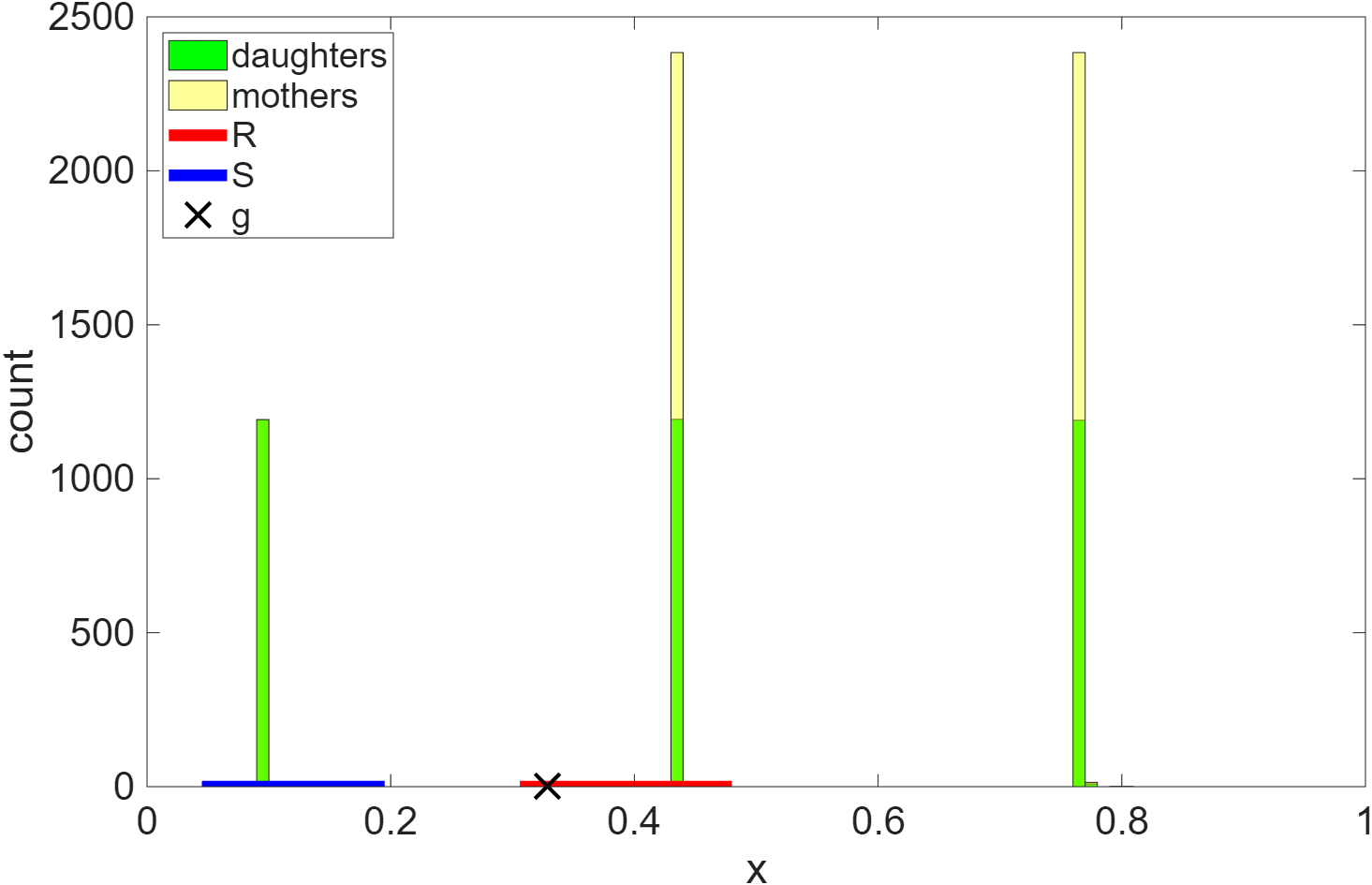} 
  \includegraphics[width=0.49\textwidth, height=4.0cm]{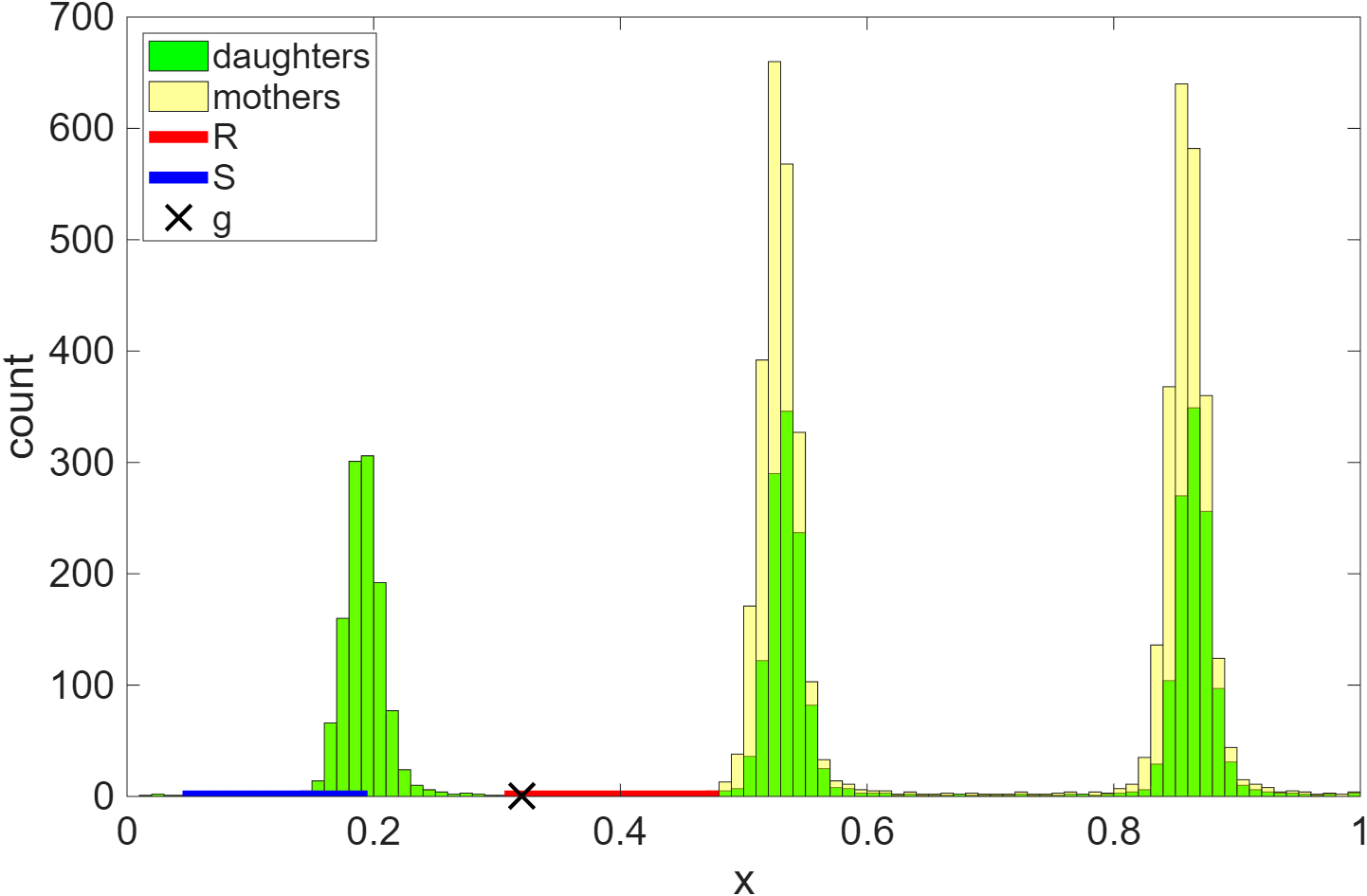}
\end{center}
\caption{
Under certain parameters in a model with asymmetric division and positive feedback, temporal clusters of cells arise from random  initial conditions.
The shown simulations used the following common parameters:
$r_1 = 0.306, $  $r_2 = 0.48,$ $s_1 = 0.045,$ $s_2 = 0.195$, and $\rho(I) = 0.6$. The tightly clustered simulation on the left used $g = 0.3288124$, while the right simulation used $g = 0.32$.
}
\label{fig:23plot}
\end{figure}

Temporal clustering has been shown to exist in `cell cycle-related' autonomous 
oscillations in bioreactor experiments \citep{hls11}, 
where metabolic variables such as dissolved oxygen and glucose
enter dramatic and stable oscillations with long periods (e.g., 4 hours).
See \cite{burnetti16} for more about yeast autonomous oscillations.

\citet{jbd10} proposed and analyzed models of populations of yeast in the 
cell cycle and showed that nonlocal, negative feedback can lead to stable temporal clustering.
Subsequently, temporal clustering was shown to be robust in different
versions of the model \cite{jtb12,gap14,stoch14,mediated14,metab18}. 
In these papers, symmetric division was a modeling assumption 
although it is well known that budding yeast divide asymmetrically \citep{lord,brewer84}; 
mother cells are larger than daughter cells immediately after division. 

Early in the study of autonomous oscillations, 
\citet{bellgardt94_I} proposed that asymmetric division 
might play a role in some cell cycle-related oscillations 
and lead to population structures in which cell density distributions 
consist of $p$ repeating patterns for mother cells
and $q$ repeating patterns for daughter cells, $p \le q$. 
See \cref{fig:PCSpqcell}.
He did not specifically propose temporal clustering, 
but that phenomenon is the simplest example of the suggested structure. 
Recently, \citet{naser25}, based on \cite{naser23}, proved that asymmetric division  
and {\em negative feedback} 
could produce asymptotically stable temporal clustering. 
In the current manuscript, we prove similar results under {\em positive feedback}. 
This is significant since Young et al.\ \cite{jtb12} proved 
that under symmetric division, positive feedback can only induce stability of
the {\em synchronized} solution, i.e.\ a single temporal cluster.

We will define and study a special type of $p:q$ periodic solution that we call a \acf{PCS}; 
see \cref{fig:PCSpqcell}.

\begin{figure}[ht!]
\begin{center}  
\scalebox{.8}{
\begin{tikzpicture}
\begin{scope}
\draw (0,0) circle (3cm);
\foreach \angle / \label in 
{ 90/$1 \sim 0$,-60 /$ g$ , -75/$r_1$ ,-120/$r_2$ , 60/$s_1$, -30/$s_2$}
{
\draw[line width=1pt] (\angle:3cm) -- (\angle:3.3cm);
\draw (\angle:3.6cm) node{\textsf{\label}};
}
\foreach \angle in {-60}
\draw[line width=1pt] (\angle:3.6cm) -- (\angle:3.6cm);
\usetikzlibrary{calc}
\tikzset{
pics/proton/.style={code={\shade[ball color=red] circle (2.5pt);}},
pics/neutron/.style={code={\shade[ball color=blue] circle (2.5pt);}},
pics/nucleussmall/.style={code={%
	\pgfmathdeclarerandomlist{nucleon}{{neutron}{neutron}{neutron}{neutron}{neutron}}
	\pgfmathsetseed{#1+1}
	\foreach \A/\R in {14/0.3,8/0.2, 5/0.13, 1/0}{
		\pgfmathsetmacro{\S}{360/\A}
		\foreach \B in {0,\S,...,360}{
			\pgfmathrandomitem{\C}{nucleon}
			\pic at ($(\B+2*\A+5*rnd:\R)$) {\C}; } }} },
pics/nucleusbig/.style={code={%
	\pgfmathdeclarerandomlist{neutron}{{neutron}{neutron}{neutron}{neutron}{neutron}}
	\pgfmathsetseed{#1+1}
	\foreach \A/\R in {8/0.2, 5/0.13, 1/0}{
		\pgfmathsetmacro{\S}{360/\A}
		\foreach \B in {0,\S,...,360}{
			\pgfmathrandomitem{\C}{nucleon}
			\pic at ($(\B+2*\A+5*rnd:\R)$) {\C}; } }} },
pics/nucleusbiggest/.style={code={%
	\pgfmathdeclarerandomlist{neutron}{{neutron}{neutron}{proton}{neutron}{neutron}}
	\pgfmathsetseed{#1+1}
	\foreach \A/\R in {8/0.2, 8/0.13, 1/0}{
		\pgfmathsetmacro{\S}{360/\A}
		\foreach \B in {0,\S,...,360}{
			\pgfmathrandomitem{\C}{nucleon}
			\pic at ($(\B+2*\A+5*rnd:\R)$) {\C}; } }} },
pics/nucleusbig/.style={code={%
	\pgfmathdeclarerandomlist{nucleon}{{proton}{proton}{neutron}{neutron}{neutron}}
	\pgfmathsetseed{#1+1}
	\foreach \A/\R in {24/0.4, 24/0.3, 24/0.2, 13/0.35, 11/0.27, 6/0.15, 1/0}{
		\pgfmathsetmacro{\S}{360/\A}
		\foreach \B in {0,\S,...,360}{
			\pgfmathrandomitem{\C}{nucleon}
			\pic at ($(\B+2*\A+5*rnd:\R)$) {\C}; } }} }
}

\draw [red, dash pattern=on 3pt off 2pt,postaction={decorate,
		decoration={markings,
			mark=between positions 3pt and 1 step 5pt with {\arrow{>};}}}] (0,3) -- ++(1.5,-5.55);
		\foreach \angle in {-60}
		\draw[line width=1pt] (\angle:3.6cm) -- (\angle:3.6cm);

\draw[red,dashed] [-{latex[scale=5.0]}](0,3) -- (1.5,-2.55) node[pos=0.5,sloped,above] {$Mother$};

\pic at (0,2.6) {nucleussmall}; 
\pic at (2.15,1.55) {nucleussmall};
\pic at (2.61,-.58) {nucleussmall};
\pic at (1.3,-2.25) {nucleusbig=1}; 
\pic at (-.8,-2.4) {nucleusbig=1};
\pic at (-2.4,-.8) {nucleusbig=1};
\pic at (-2.1,1.4) {nucleusbig=1}; 
\draw[thick][red,->] (-2,4) arc (120:70:4cm) ;
\end{scope}
\begin{scope}[xshift=8.5cm]
\definecolor{qqqqff}{rgb}{0.,0.,1.}
\definecolor{ffqqqq}{rgb}{1.,0.,0.}
\draw (0,0) circle (3cm);

\draw [red, dash pattern=on 3pt off 2pt,postaction={decorate,
		decoration={markings,
			mark=between positions 3pt and 1 step 5pt with {\arrow{>};}}}] (0,3) -- ++(2,-5.55);
		\foreach \angle in {-60}
		\draw[line width=1pt] (\angle:3.6cm) -- (\angle:3.6cm);

\draw[red,dashed] [-{latex[scale=5.0]}](0,3) -- (2,-2.55) node[pos=0.5,sloped,above] {$Mother$};

\foreach \angle / \label in 
{ 90/$1 \sim 0$,-50 /$ g$ , -65/$r_1$ ,-110/$r_2$ , 60/$s_1$, -10/$s_2$}
{
\draw[line width=1pt] (\angle:3cm) -- (\angle:3.4cm);
\draw (\angle:3.8cm) node{\textsf{\label}};
}
\foreach \angle in {-50}
\draw[line width=1pt] (\angle:3cm) -- (\angle:3.6cm);

\shade[ball color=qqqqff] (265:2.7) circle (3.5mm);
\shade[ball color=ffqqqq] (265:3.3)  circle (3.5mm);

\shade[ball color=qqqqff] (-50:2.7) circle (3.5mm);
\shade[ball color=ffqqqq] (-50:3.3)  circle (3.5mm);

\shade[ball color=qqqqff] (140:2.7) circle (3.5mm);
\shade[ball color=ffqqqq] (140:3.3)  circle (3.5mm);

\shade[ball color=qqqqff] (185:2.7) circle (3.5mm);
\shade[ball color=ffqqqq] (185:3.3)  circle (3.5mm);

\shade[ball color=ffqqqq] (225:3.3) circle (3.5mm);
\shade[ball color=qqqqff] (225:2.7) circle (3.5mm);

\shade[ball color=qqqqff] (90:2.7) circle (3.2mm);
\shade[ball color=qqqqff] (-5:2.7) circle (3.2mm);
\shade[ball color=qqqqff] (40:2.7) circle (3.2mm);
\draw[thick][red,->] (-2,4) arc (120:70:4cm) ;
\end{scope}
\end{tikzpicture}
}  
\captionof{figure}{Left: Structure of a $4:7$ \ac{PCS} where blue balls express daughter cells and red balls express mother cells.  
Right: A $5:8$ \ac{PCS} with the clustered models (blue daughter clusters 
and red mother clusters). 
Observe that mother cells jump from $1 \sim 0$ to $g$ as depicted by the red arrows, 
while daughter cells progress continuously at $1 \sim 0$.}
\label{fig:PCSpqcell}
\end{center}
\end{figure}
These solutions are very special in that both the existence and the stability necessitate restrictive conditions on the model parameters.
However,  we show that these conditions can be satisfied. 
The asymptotic stability then implies that such a \ac{PCS} 
is also structurally stable, 
i.e.\ similar stable solutions exist for an open set of  nearby parameter values.

In \cref{sec:cells} we introduce models that describe the dynamics of individual cells. 
In simulations of the model, such as \cref{fig:23plot}, $p:q$ solutions form spontaneously.
A key mathematical challenge is that the simplest model is
not continuous; mother cells are modeled as jumping ahead in
the cycle; see \cref{fig:PCSpqcell}.
\citet{naser23} overcame this difficulty by introducing complementary
smooth models called Fixed and Switching that we review in \cref{sec:clusters}.
In \cref{sec:perfect} we formally introduce \ac{PCS}s, 
an ideal version of $p:q$ periodic solution. 
In \cref{sec:returnmaps}, we introduce special return maps, $F_f$ and $F_s$, 
that can be viewed as factor maps of the Poincar\'e return map \cite{jtb12}.  

In \cref{sec:12,sec:23}, we will prove that mode $1:2$ and $2:3$ \ac{PCS} can exist 
and be linearly asymptotically stable 
by showing that the eigenvalues of maps $F_s$ and $F_f$ 
are in the interior of the unit circle in $\mathbb{C}$. 
By linearizing the maps \(F_\alpha\) at the fixed point, 
we can use the chain rule to determine the stability of the fixed point under \(P_\alpha\),
corresponding to the periodic \ac{PCS}.
A crucial ingredient in the analysis is the calculation of exact solutions 
for short periods of time to form the return maps.

\section{Models and Dynamical Systems Methods}

\subsection{The Cell Cycle Model} 
\label{sec:cells}

The cell cycle is a biological process in which a cell matures and divides into 2 cells. 
It consists of many complex chemical and physical processes that are known to be sensitive
to agents and conditions in the culture (see e.g.\ \cite{beuse98,beuse99,burnetti16}). 
In the model proposed in \cite{jtb12} to capture the essense of this sensitivity, the cell cycle is represented in coordinates $[0,1]$
with $1$ denoting division. A subinterval \( R = [r_1, r_2] \subset [0,1] \) is termed the \textit{responsive region},
and \( S = [s_1, s_2] \) is the \textit{signaling region}, where we assume 
\( 0 < s_1 < s_2 < r_1 < r_2 < 1 \). 
Cells in \( S \) affect the speed of progress of cells in \( R \) 
via a feedback function \( \rho \). 
The position of a cell, 
denoted by \(x_i(t) \in [0, 1]\), 
matures over the interval \([0, 1)\) according to the following equation:
\begin{equation}\label{basiccell}
\frac{dx_i}{dt}=
\begin{cases}
   1, \quad \textrm{if} \quad x_i \notin R \\
1+  \rho(I)  , \quad \textrm{if} \quad x_i \in R
\end{cases}
\end{equation}
for each cell \( i=1, \ldots, n\). 
Here, \( I \) represents the proportion of cells located within the signaling region \( S \), that is, 
\begin{equation}\label{I}
I \equiv \frac{\#\{i : x_i \in S\}}{n}. 
\end{equation}
In this work we consider positive feedback, which means \( \rho \) is a monotonically increasing function with $\rho (0) = 0$.

In the original model \cite{jtb12} and similar studies, all cells were treated as identical. 
However, in this work, 
we will categorize cells as either \textit{mothers} or \textit{daughters}. 
When a cell reaches 1 it either goes to 0 in the cell cycle, 
becoming a \textit{daughter} cell, 
or goes to \( g  \in (0, 1)\), becoming a \textit{mother} cell. 
By starting at $g>0$, the mother cells are more mature and are able to reach the next division
more quickly than daughter cells, representing assymetric division.
See \cref{fig:PCSpqcell}.

		
	

In \ac{YMO} experiments (for instance \cite{palkova00}), 
cells are harvested from the bioreactor at a constant rate, 
keeping the number of cells near the carrying capacity. 
Therefore, it is reasonable to hold the number of cells fixed at \( n\). 
At division of budding yeast, exactly half of the cells produced are mothers and half are daughters. 
To reflect this biological fact in a model, we will assign cells the roles of either  \textit{mother} or \textit{daughter} in an alternating pattern.
\begin{definition}
Model $\bf c_a$. 
As cells reach $1$, they {\em alternate} - they are assigned the opposite designation of the 
previous cell. 
\end{definition}
In other words, the designation of cells at division alternates between \textit{mother} (go to \(g\)) 
or \textit{daughter} (go to \(0\)). 
Together, \cref{basiccell,I}, as well as $\bf{c_a}$ 
make up the \textit{Alternating Cell Model}.

Implementing the $\bf{c_a}$ model in a Matlab program, 
simulations of \cref{basiccell,I} 
were created to demonstrate Perfectly Clustered Solutions (\ac{PCS}) of modes  2:3 and 1:2 in \cref{fig:12plot,fig:23plot}.
The program assigns cells to random initial conditions and runs the model numerically. 
At the end of the simulation, the program 
records a histogram showing the distribution of cells in the cell cycle. 
Under specified parameters, clustered solutions appear spontaneously. 
This suggests that these solutions are asymptotically stable periodic states 
of the system. 
The program is included in the Supplementary Materials.

In \cref{fig:12plot}, 
there are 3000 cells that have self-organized into 2 daughter clusters and 1 mother cluster. 
Once formed, one daughter cluster overlaps and travels with the mother cluster. 
We observe a clearly clustered solution of mode 1:2 in the left image, 
as well as a \ac{PCS} of mode 1:2 in the right image. 
In order for the perfectly clustered structure to remain coherent, 
a daughter cluster must reach $g$ at the same time that the mother/daughter pair
reach $1$.

\begin{figure}[ht] 
  \includegraphics[width=0.49\textwidth, height=4.0cm]{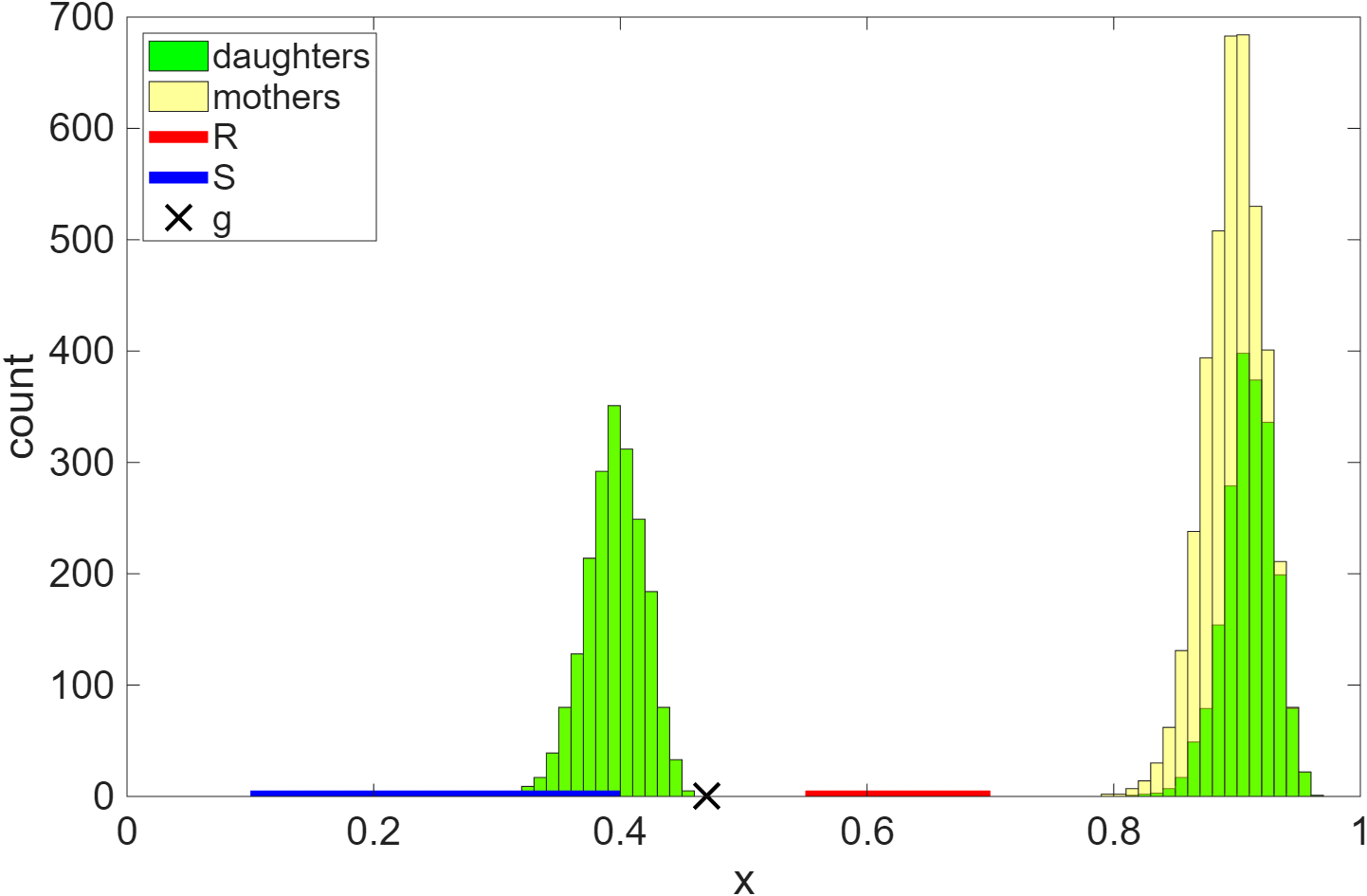} 
  \includegraphics[width=0.49\textwidth, height=4.0cm]{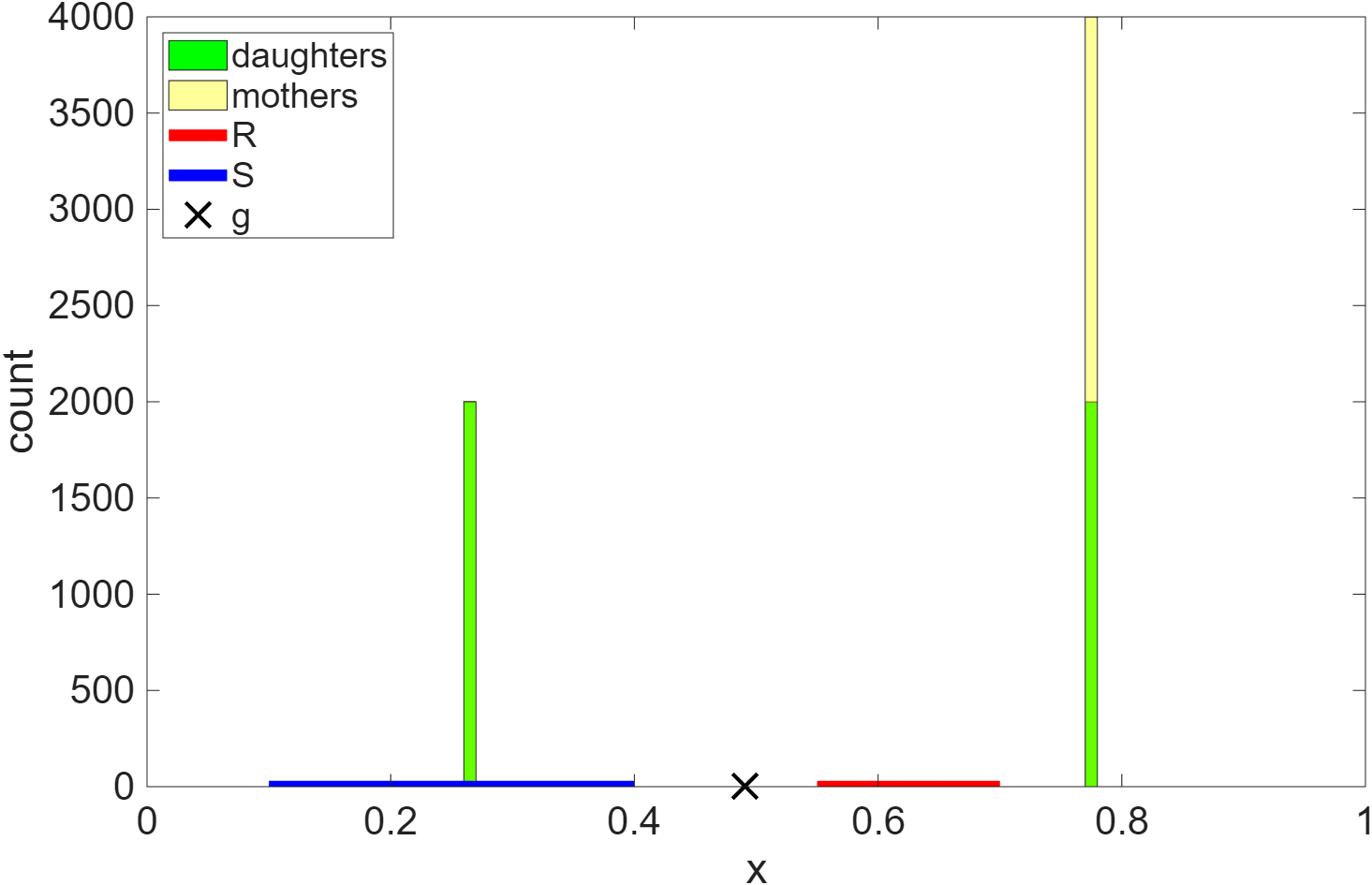}
\caption{Histogram of temporal clusters for \cref{basiccell,I} with convention $\bf c_a$. 
Temporal clusters  arise from simulations with random initial conditions of 3000 cells. 
The image on the left is a solution that is clearly clustered, 
but not perfect, where the parameters are as follows: 
$r_1 = 0.55, $  $r_2 = 0.7,$ $s_1 = 0.1,$ $s_2 = 0.4$, $\rho(I) = 0.6$ and $g = 0.47$. 
The smaller cluster consists of exactly 1000 daughters while the two large clusters are made up of 1000 daughters and 1000 mothers. 
The image on the right has the same parameters, except for $g$, which is changed to $g=0.4911$, resulting in a Perfectly Clustered Solution. 
} 
\label{fig:12plot}
\end{figure}

Similarly, in \cref{fig:23plot}, we produced histograms for a snapshot of mode 2:3 solutions. 
We initialize a total of 5000 randomly placed cells. 
At any given time, one daughter cluster is moving independently, 
while each of the other two daughter clusters are moving with a mother cluster. 
On the left is a clearly clustered solution of mode 2:3, 
and on the right is a near \ac{PCS} of mode 2:3.
We note that in \cref{fig:12plot,fig:23plot}, each cluster happens to have exactly 1000 cells.
This equidistribution of cells in temporal clusters seems to be a universal phenomenon 
\cite{siads20}.

While $p:q$ simulations can be created and \ac{PCS}s can be observed in higher values of both $p$ and $q$, 
the biological significance of these results would be negligible. 
As the number of clusters grows, the number of cells within each cluster diminishes, 
reducing the strength of feedback signals between cells. 
This weakened feedback should eventually become insufficient to coordinate cellular progression
in the presence of biological noise.

This model with assumption $\bf c_a$ 
does not generate a continuous dynamical system, since mother cells jump from $1$ to $g$. 
Therefore, following \cite{naser23,naser25}, we will also consider the following conventions that do in fact generate continuous systems on appropriate phase spaces:
\begin{definition} 
\label{def:cfs}
Consider \cref{basiccell,I} with the assumptions:
\begin{itemize}
\item[$\bf c_f$] - individual cells keep their mother/daughter identity when they cross $1$ and 
\item[$\bf c_s$] - cells switch their mother/daughter identity when they cross $1$.  
\end{itemize}
We call \cref{basiccell,I} with $\bf c_f$ the {\em fixed model} and with $\bf c_s$ the {\em switching model}.
\end{definition}

We note that in order to achieve a \ac{PCS} in models $\bf c_f$ and $\bf c_s$, 
we must begin with the correct proportions of daughter and mother cells. 
For the $1:2$ case, this means that we must begin with $\frac{2}{3}$ of our cells being daughters 
and $\frac{1}{3}$ of our cells being mothers.


\subsection{The Cluster Model and Perfectly Clustered Solutions} 
\label{sec:perfect}
\label{sec:clusters}

As in \cite{naser25}, we will use ``cluster models" to simplify calculations and streamline analysis.
Rather than tracking the positions of thousands of cells, 
we treat synchronized cells as one entity, which we call a cluster.
Two cells are synchronized in $[0, 1)$, if their positions are the same, 
at least for short periods of time, e.g.\ until they reach division.

We let $D_i \in [0, 1)$ denote the position of a cluster containing daughter cells,
and $M_j \in [g, 1)$ denote the position of a cluster containing mother cells.
If $D_i=M_j$, they are still considered as separate entities. 
The clusters' positions, following from \cref{basiccell}, evolve by the following equations
\begin{equation}\label{eqn:clusters}
\frac{d(D_i,M_i)}{dt}  = \begin{cases}
1, \quad \textrm{if} \quad D_i, M_i \notin R \\
1+  \rho(I)  , \quad \textrm{if} \quad D_i, M_i\in R 
\end{cases}
\end{equation}
where $R$ is  the responsive region as before, and $I$ is as stated in \cref{I}.
Therefore, instead of equations for each individual cell, 
in the case of a $p:q$ clustered solution we have equations for 
$M_1(t),$ $M_2(t)$ $...M_p(t)$ and $D_1(t),$ $ D_2(t)$ $...D_q(t)$. 
In the cluster model, with each cluster having the same number of cells, 
it follows that the
previously defined $I$ \cref{I} equals the fraction of clusters in $S$.
To avoid the dispersion of the clusters as they pass $1$, 
we will also follow the same conventions from \cite{naser25} that mirror \cref{def:cfs}.
\begin{definition}
For a system of $p$ mother clusters and $q$ daughter clusters, define:
\begin{enumerate}
\item[$\bf C_f$.]   Daughter clusters remain daughter clusters  (i.e.\ go to $0$) and mother clusters remain mothers (go to $g$) when they reach $1$. We will call this  the   {\em  fixed clustered  model}.
\item[$\bf C_s$.]    Daughter clusters become mother clusters (go to $g$) and mother clusters become  daughters (go to $0$) when they reach $1$.   We call this the {\em switching clustered model.}
\end{enumerate}
\end{definition}
For any solution $\Xb(t)$ in $\bf C_f$ or $\bf C_s$ there is a clustered solution $\xb(t)$ 
in $\bf c_f$ or $\bf c_s$ whose coordinates are given by $\Xb(t)$ \cite{naser25}. Thus $\bf C_f$
can be considered as an invariant subspace of $\bf c_f$ and similarly for $\bf C_s$ and $\bf c_s$.

A \ac{PCS} is able to occur in any of the $\bf c_a$, $\bf c_f$, $\bf c_s$, $\bf C_f$,  or $\bf C_s$ models.
In order for our models to exhibit a \ac{PCS}, the progression of cell groups must align exactly.
If a cluster of mother cells reach $1$ and advance to $g$, a cluster of 
daughter cells must also reach $g$ at that exact moment. 
We restrict our attention to cases where such timing is perfectly coordinated, 
ensuring that the groups stay in perfect step as they move through the cycle.
\begin{definition}
\label{perfect}
We call a solution a {\em mode $p:q$  \acf{PCS}} if for all $t\ge 0$ (a) all cells are in
$q$ daughter clusters or $p$ mother clusters, (b) each mother or daughter cluster contains the same
number of cells ($n/(p+q)$) (c) each mother cluster is always exactly synchronized
with a daughter cluster. 
\end{definition}

A \ac{PCS} in the cell model matches precisely to a \ac{PCS} in the cluster model.
We illustrate \ac{PCS}s in \cref{fig:PCSpqcell} 
using cells (left) and clusters (right).
A slight modification of a proof in \cite{naser25} gives the following.
\begin{theorem}
\label{thm:existence}
   For any feedback function (positive or negative), any set of parameters \(0<s_1<s_2<r_1<r_2<1 \) and any \(1\leq p\leq q\), there is a \(g\in[0,1)\) such that a \(p:q\) PCS exists.
\end{theorem}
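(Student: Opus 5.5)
The plan is to build the PCS directly in the cluster model ($\bf C_f$ or $\bf C_s$) as a configuration that is equally spaced in a suitable time variable. The parameter $g$ is then fixed by a single continuity argument.

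\textbf{Step 1: the ansatz.} Place the $p+q$ clusters so that, along the orbit, a division event happens at equal time intervals $\tau$. At each event one mother cluster and its synchronized daughter partner reach $1$. One of them goes to $0$ and the other to $g$. The partner of the cluster sent to $g$ is a daughter cluster, with no mother partner, that arrives at $g$ at exactly that instant; it is the new partner. So the daughter clusters form a chain of $q$ positions. The first $q-p$ of them, those below $g$, travel alone. The remaining $p$ positions, from $g$ up to $1$, are each occupied by a mother/daughter pair. The orbit is then a rigid rotation in the following sense. After time $\tau$ the configuration equals the original one with the labels shifted. This is a periodic solution satisfying (a)--(c) of Definition~\ref{perfect}, provided the timing condition below holds.

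\textbf{Step 2: the timing condition.} Because every cluster runs through the same cycle with a common lag, $I(t)$ is a $\tau$-periodic function. It depends only on the lag pattern, and each cluster is counted with weight $1/(p+q)$ (paired clusters count twice). Parametrize by $\tau$. Let a reference daughter start at $0$ and evolve under \cref{eqn:clusters} with this periodic $I$, i.e.\ with the signal produced by its predecessors at lags $k\tau$. It reaches $1$ at time $T_D(\tau)$. Self-consistency for the daughter chain requires $T_D(\tau)=q\tau$, since $q$ daughter slots pass per cycle. The mother portion is then automatic: we set $g:=D(( q-p)\tau)$, the position of the reference daughter at time $(q-p)\tau$. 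The mother starting at $g$ rides with a daughter for the remaining $p\tau$, so it stays synchronized by construction. Thus existence reduces to solving $T_D(\tau)=q\tau$ in $\tau$, and then $g$ is given by the formula.

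\textbf{Step 3: solving.} The speed lies in $[1,1+\max\rho]$ for positive feedback, or in $[1+\min\rho,1]$ for negative feedback (with $1+\rho>0$ assumed). Hence $T_D$ is bounded above and below by positive constants. Consequently $q\tau-T_D(\tau)$ is negative for small $\tau$ and positive for large $\tau$. If $T_D$ is continuous in $\tau$, the intermediate value theorem gives a root. Then $g\in[0,1)$ follows because $0\le (q-p)\tau< q\tau$; the case $p=q$ gives $g=0$, which is why the theorem allows $g=0$.

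\textbf{Main obstacle.} The delicate point is the self-referential nature of $I$: the signal seen by the reference cluster comes from clusters at lags that themselves depend on $\tau$ and on the same dynamics. Also, $I$ is piecewise constant, so $T_D$ is built from piecewise-linear pieces whose breakpoints move with $\tau$. I would handle this as in \cite{naser25}. First, observe that the time spent in $S$ is exactly $s_2-s_1$, since $S$ precedes $R$ and the speed there is $1$. Likewise, the times spent before $r_1$ and after $r_2$ are fixed. So only the passage through $R$ depends on $I$. This makes $I(t)$ an explicit function of $\tau$: each earlier cluster is in $S$ on a known interval of length $s_2-s_1$. The passage time through $R$ then solves $\int(1+\rho(I))\,dt=r_2-r_1$. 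Its dependence on $\tau$ is continuous, because the integrand changes only on sets of measure continuous in $\tau$. This continuity check is the only real work; I expect it to go through since no cluster's timing feeds back on the $S$-intervals.
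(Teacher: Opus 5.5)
Your proposal is correct, and it is a complete proof, not just a sketch. The decisive observation is the one in your last paragraph: since $S$ lies strictly before $R$, every cohort occupies $S$ exactly during a rigid translate of $[s_1,s_2]$ by a multiple of $\tau$ (mod $q\tau$). Whether it is paired is decided by comparing its reference time with $(q-p)\tau$. So $I_\tau$ is an explicit $\tau$-periodic step function with breakpoints affine in $\tau$, the self-reference disappears, and existence reduces to the scalar equation $T_D(\tau)=q\tau$, with $g=D((q-p)\tau)$ as an output.

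The paper gives no argument of its own for this theorem. It only says that a slight modification of the existence proof in \cite{naser25} yields it, so there is no in-text proof to match step by step. The citation route buys brevity and reuse of the earlier negative-feedback machinery. Your route buys a self-contained proof that treats both signs of feedback and all $1\le p\le q$ uniformly. It also agrees with the explicit computations later in the paper: for $1:2$ and $2:3$ one has $q-p=1$, so $g=D(\tau)=\tau$. This is exactly the relation $t^*=g$ noted after \cref{lem:12g}, and it matches $1-w=g$ in \cref{lem:23gw}.

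Three points to tighten when writing it up:
\begin{enumerate}
\item Run the intermediate value argument on $\tau\ge r_1/q$. There $[s_1,s_2]\subset[0,q\tau)$, so the formula for $I_\tau$ is unambiguous. At $\tau=r_1/q$ you still have $q\tau-T_D(\tau)<0$, because $T_D>r_1+(1-r_2)$.
\item When $(q-p)\tau\in(s_1,s_2)$, a cohort's weight jumps from $1$ to $2$ while it is inside $S$. This is one more breakpoint affine in $\tau$, so it does not affect the $L^1_{\mathrm{loc}}$-continuity of $\tau\mapsto\rho(I_\tau)$ on which your continuity of $T_D$ rests. It should still be said explicitly.
\item For negative feedback, the standing assumption $1+\rho>0$, which you flagged, is needed for the speed bounds and for the passage time through $R$ to be uniquely defined.
\end{enumerate}
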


A PCS represents a highly structured solution. 
For a $p:q$ configuration (with $p\leq q$) to exist,
the total number of cells must be divisible by $p+q$. 
There is always a trivial \ac{PCS} in the absence of any 
feedback under specific timing conditions.
However, this configuration could be easily disrupted; 
even a slight change in $g$ breaks the periodicity. 
We show in this paper that positive feedback can stabilize a \ac{PCS}.


\subsection{Return Maps for Clustered Models} 
\label{sec:returnmaps}

Let \(\Xb(t)\) be a mode \(p:q\) PCS with minimal period \(T\). 
Define \(d_i = D_i(0)\) and \(m_i = M_i(0)\). 
Since the time derivative of $M_1(t)$ is always positive by assumption, 
it follows that every solution intersects the set $\Sigma \equiv \{ M_1 = g \}$ in finite positive time.
In dynamical systems theory, $\Sigma$ is called a {\em Poincar\'{e} section}.  
For a PCS, if \(m_1 = g\), then \(d_1 = 0\) and \(M_1(T) = g\), 
where $T$ is the period of the PCS.
Let \(\hat{\Xb}(0)\) be an initial condition that lies on \(\Sigma\) 
in a small neighborhood of \(\Xb(0)\). We now define a {\em Poincar\'{e} first return
map} on this small neighborhood, 
which simply describes how solutions starting from these initial
conditions on $\Sigma$ return to $\Sigma$.
For each $\hat{\Xb}(0)$, let \(\tau \approx T\) be such that \(\hat{\Xb}(\tau) \in \Sigma\). 
For the models \(\bf C_f\) or $\bf C_s$, 
define the Poincaré map \(P_\alpha\), $\alpha = \mathbf{f}, \mathbf{s}$ by
\begin{equation*}
\begin{split}
P_\alpha: \ & (d_1, ..., d_q, m_1=g, m_2, ..., m_p)  \\
       & \hspace*{1cm}     \mapsto   (D_1(\tau),...,D_q(\tau),M_1(\tau)=g,M_2(\tau),...,M_p(\tau))
 \end{split}
\end{equation*}
As in \cite{naser25}, the Poincaré maps are piecewise affine, and the stability of the PCS is determined by the eigenvalues of the derivative (Jacobian) \(DP_\alpha\). If \(\max_{\lambda \in \sigma(DP_\alpha)} |\lambda| < 1\), the PCS is locally asymptotically stable. If \(\max_{\lambda \in \sigma(DP_\alpha)} |\lambda| = 1\), the PCS is at most neutrally stable. If \(\max_{\lambda \in \sigma(DP_\alpha)} |\lambda| > 1\), the PCS is unstable.

For the fixed cluster model \(\mathbf{C_f}\), define \(t^*_\mathbf{f}\) by \(M_p(t^*_\mathbf{f}) = 1\), and define a map \(F_\mathbf{f}\) on a small neighborhood \(U \subset \Sigma\) of \(\Xb(0)\) by \[
F_\mathbf{f} : (d_1, d_2,..., d_q, m_2,..., m_p) \mapsto (D_q(t^*_\mathbf{f}), D_1(t^*_\mathbf{f}),..., D_{q-1}(t^*_\mathbf{f}), M_1(t^*_\mathbf{f}),..., M_{p-1}(t^*_\mathbf{f}))
\]
where we note that \(m_1 = g\) and \(M_p(t^*_\mathbf{f}) = g\) 
and thus these coordinates can be omitted from the map. 
Each cluster of the perturbed solution is mapped by \(F\) 
to a small neighborhood of the initial position of the cluster ahead of it. 
For instance, \(D_{i-1}(t^*_\mathbf{f})\) is in a neighborhood of \(d_i\). 
If we identify a small neighborhood of the PCS on the surface \(\{ M_p = g \}\) with the neighborhood \(U \subset \Sigma\), we can simply consider \(F_\mathbf{f}\) to be a map from \(U\) to itself.

For the clustered switching model \(\mathbf{C}_\mathbf{s}\), 
define \(t^*_\mathbf{s}\) by \(D_q(t^*_\mathbf{s})=1\), and define 
\begin{equation*}
F_\mathbf{s}:(d_1,\dots d_q,m_2,...,m_p) \mapsto (M_p(t^*_s),D_1(t^*_s),...,D_{q-1}(t^*_s),M_1(t^*_s),...,M_{p-1}(t^*_s))
\end{equation*}
Note that at time \(t^*_\mathbf{s}\), \(D_q\) becomes a mother with position \(g\), 
while \(M_p\), who was synchronized with \(D_q\), 
becomes a daughter when it crosses \(0\). 
We omit the coordinates \(m_1\) and \(D_q(t^*_\mathbf{s})\) 
from the map since they are both fixed at \(g\). 
As above, for a small neighborhood \(U \subset \Sigma\) of \(\Xb(0)\), 
we can consider \(F_s\) a map from \(U\) to itself.

In \cref{fig:F12}, we illustrate the actions of $F_f$, $F_s$ and $P$ on a 1:2 \ac{PCS}. 
We see that under the \(\mathbf{C_s}\) model, 
each cluster of a \ac{PCS} is mapped by \(F_\mathbf{s}\) to the initial position of the next cluster. 
Under iteration, it cycles through every initial position until returning to its own on the 
\((p+q)^{\text{th}}\) iteration. 
For \(C_\mathbf{f}\), all clusters return to their original positions after 
\(\operatorname{lcm}(p,q)\) iterations.

\begin{figure}[ht]
\begin{center} 
\scalebox{.8}{ 
\begin{tikzpicture}
\begin{scope} [xshift=-4cm,yshift=0cm]
\definecolor{qqqqff}{rgb}{0.,0.,1.}
\definecolor{ffqqqq}{rgb}{1.,0.,0.}
\definecolor{Yellow}{rgb}{1, 1, 0}
\draw (0,0) circle (1.25cm);
\foreach \angle / \label in 
{ 90/$1 \sim 0$,-80 /$g
$ , -100/$r_1$ ,-140/$r_2$ , 60/$s_1$, -30/$s_2$}
{
\draw[line width=1pt] (\angle:1.5cm) -- (\angle:1.25cm);
\draw (\angle:1.9cm) node{\textsf{\label}};
}
\foreach \angle in {-80}
\draw[line width=1pt] (\angle:1.6cm) -- (\angle:1.2cm);
\shade[ball color=ffqqqq] (-80:1.4) circle (1.5mm);
\shade[ball color=Yellow] (-80:1.1) circle (1.5mm);
\shade[ball color=qqqqff] (90:1.1) circle (1.5mm);
\end{scope}
\draw[thick][blue,->] (-2.5,0) --node[above]{$F_{\bf f}$}(-1.5,0);
\begin{scope} [xshift=0cm,yshift=0cm]
\definecolor{qqqqff}{rgb}{0.,0.,1.}
\definecolor{ffqqqq}{rgb}{1.,0.,0.}
\definecolor{Yellow}{rgb}{1, 1, 0}
\draw (0,0) circle (1.25cm);
\foreach \angle / \label in 
{ 90/$1 \sim 0$,-80 /$g
$ , -100/$r_1$ ,-140/$r_2$ , 60/$s_1$, -30/$s_2$}
{
\draw[line width=1pt] (\angle:1.5cm) -- (\angle:1.25cm);
\draw (\angle:1.9cm) node{\textsf{\label}};
}
\foreach \angle in {-80}
\draw[line width=1pt] (\angle:1.6cm) -- (\angle:1.2cm);
\shade[ball color=qqqqff] (-80:1.1) circle (1.5mm);
\shade[ball color=ffqqqq] (-80:1.4) circle (1.5mm);
\shade[ball color=Yellow] (90:1.1) circle (1.5mm);
\end{scope}
\draw[thick][blue,->] (1.5,0) --node[above]{$F_{\bf f}$}(2.5,0);
\begin{scope} [xshift=4cm,yshift=0cm]
\definecolor{qqqqff}{rgb}{0.,0.,1.}
\definecolor{ffqqqq}{rgb}{1.,0.,0.}
\definecolor{Yellow}{rgb}{1, 1, 0}
\draw (0,0) circle (1.25cm);
\foreach \angle / \label in 
{ 90/$1 \sim 0$,-80 /$g
$ , -100/$r_1$ ,-140/$r_2$ , 60/$s_1$, -30/$s_2$}
{
\draw[line width=1pt] (\angle:1.5cm) -- (\angle:1.25cm);
\draw (\angle:1.9cm) node{\textsf{\label}};
}
\foreach \angle in {-80}
\draw[line width=1pt] (\angle:1.6cm) -- (\angle:1.2cm);
\shade[ball color=ffqqqq] (-80:1.4) circle (1.5mm);
\shade[ball color=Yellow] (-80:1.1) circle (1.5mm);
\shade[ball color=qqqqff] (90:1.1) circle (1.5mm);
\end{scope}

\begin{scope} [xshift=-4cm,yshift=-5cm]
\definecolor{qqqqff}{rgb}{0.,0.,1.}
\definecolor{ffqqqq}{rgb}{1.,0.,0.}
\definecolor{Yellow}{rgb}{1, 1, 0}
\draw (0,0) circle (1.25cm);
\foreach \angle / \label in 
{ 90/$1 \sim 0$,-80 /$g
$ , -100/$r_1$ ,-140/$r_2$ , 60/$s_1$, -30/$s_2$}
{
\draw[line width=1pt] (\angle:1.5cm) -- (\angle:1.25cm);
\draw (\angle:1.9cm) node{\textsf{\label}};
}
\foreach \angle in {-80}
\draw[line width=1pt] (\angle:1.6cm) -- (\angle:1.2cm);
\shade[ball color=ffqqqq] (-80:1.4) circle (1.5mm);
\shade[ball color=Yellow] (-80:1.1) circle (1.5mm);
\shade[ball color=qqqqff] (90:1.1) circle (1.5mm);
\end{scope}
\draw[thick][blue,->] (-2.5,-5) --node[above]{$F_{\bf s}$}(-1.5,-5);
\begin{scope} [xshift=0cm,yshift=-5cm]
\definecolor{qqqqff}{rgb}{0.,0.,1.}
\definecolor{ffqqqq}{rgb}{1.,0.,0.}
\definecolor{Yellow}{rgb}{1, 1, 0}
\draw (0,0) circle (1.25cm);
\foreach \angle / \label in 
{ 90/$1 \sim 0$,-80 /$g
$ , -100/$r_1$ ,-140/$r_2$ , 60/$s_1$, -30/$s_2$}
{
\draw[line width=1pt] (\angle:1.5cm) -- (\angle:1.25cm);
\draw (\angle:1.9cm) node{\textsf{\label}};
}
\foreach \angle in {-80}
\draw[line width=1pt] (\angle:1.6cm) -- (\angle:1.2cm);
\shade[ball color=qqqqff] (-80:1.1) circle (1.5mm);
\shade[ball color=Yellow] (-80:1.4) circle (1.5mm);
\shade[ball color=ffqqqq] (90:1.1) circle (1.5mm);
\end{scope}
\draw[thick][blue,->] (1.5,-5) --node[above]{$F_{\bf s}$}(2.5,-5);
\begin{scope} [xshift=4cm,yshift=-5cm]
\definecolor{qqqqff}{rgb}{0.,0.,1.}
\definecolor{ffqqqq}{rgb}{1.,0.,0.}
\definecolor{Yellow}{rgb}{1, 1, 0}
\draw (0,0) circle (1.25cm);
\foreach \angle / \label in 
{ 90/$1 \sim 0$,-80 /$g
$ , -100/$r_1$ ,-140/$r_2$ , 60/$s_1$, -30/$s_2$}
{
\draw[line width=1pt] (\angle:1.5cm) -- (\angle:1.25cm);
\draw (\angle:1.9cm) node{\textsf{\label}};
}
\foreach \angle in {-80}
\draw[line width=1pt] (\angle:1.6cm) -- (\angle:1.2cm);
\shade[ball color=qqqqff] (-80:1.4) circle (1.5mm);
\shade[ball color=ffqqqq] (-80:1.1) circle (1.5mm);
\shade[ball color=Yellow] (90:1.1) circle (1.5mm);
\end{scope}
\draw[thick][blue,->] (5.5,-5) --node[above]{$F_{\bf s}$ }(6.5,-5);
\begin{scope} [xshift=8cm,yshift=-5cm]
\definecolor{qqqqff}{rgb}{0.,0.,1.}
\definecolor{ffqqqq}{rgb}{1.,0.,0.}
\definecolor{Yellow}{rgb}{1, 1, 0}
\draw (0,0) circle (1.25cm);
\foreach \angle / \label in 
{ 90/$1 \sim 0$,-80 /$g
$ , -100/$r_1$ ,-140/$r_2$ , 60/$s_1$, -30/$s_2$}
{
\draw[line width=1pt] (\angle:1.5cm) -- (\angle:1.25cm);
\draw (\angle:1.9cm) node{\textsf{\label}};
}
\foreach \angle in {-80}
\draw[line width=1pt] (\angle:1.6cm) -- (\angle:1.2cm);
\shade[ball color=ffqqqq] (-80:1.4) circle (1.5mm);
\shade[ball color=Yellow] (-80:1.1) circle (1.5mm);
\shade[ball color=qqqqff] (90:1.1) circle (1.5mm);
\end{scope}
\end{tikzpicture}
} 
\captionof{figure}{
The dynamics of a 1:2 \ac{PCS} under the fixed model $\bf C_f$ (top), 
and switching model $\bf C_s$ (bottom) 
applying the maps until we get back to the initial configuration, 
i.e.\ the action of the Poincar\'e map.
The interior balls represent clusters that are currently daughters, 
and the exterior balls represent current  mother clusters.}
\label{fig:F12}
\end{center}
\end{figure}


\subsection{Events and Orders}
\label{sec:events}

\begin{definition}
    Let \((D_1(t),D_2(t),...,D_q(t))\) be the coordinates of the daughter clusters in \([0,1)\). We define the following:
    \begin{itemize}
        \item \(D_i(t)=s_1\), for some \(i\), is called the {\em event \(\mathbf{s_1}\)}
        \item \(D_i(t)=s_2\), for some \(i\), is called the {\em event \(\mathbf{s_2}\)}
        \item \(D_i(t)=r_1\), for some \(i\), is called the {\em event \(\mathbf{r_1}\)}
        \item \(D_i(t)=r_2\), for some \(i\) is called the {\em event \(\mathbf{r_2}\)}
        \item \(D_i(t)=1\), for some \(i\) is called the {\em event \(\mathbf{1}\)}
    \end{itemize}
\end{definition}
For any solution, there is an associated sequence of events. For a periodic solution, 
the sequence of events is necessarily periodic.  
We will assume for the rest of the paper that no two events occur simultaneously. 
Thus, if \(\Xb(0)\) is the initial condition of a PCS 
and \(\hat{\Xb}(0)\) is a sufficiently close initial condition of another solution, 
then \(\Xb(t)\) and \(\hat{\Xb}(t)\) will share an order of events, 
at least during a finite interval of time.

\begin{definition}
    Let \(\Xb(0)\) be a clustered configuration on the Poincar\'e section and \(T\) its return time. We call a clustered configuration \(\hat{\Xb}(0)\) or a non-clustered configuration \(\hat{\xb}(0)\) \textbf{event close} to \(\Xb(0)\) if \(d(\Xb(0),\hat{\Xb}(0))\) is small enough that 
    $\Xb(t)$ 
     and 
     $\hat{\Xb}(t)$ or 
     $\hat{\xb}(t)$ 
    have the same order of events, up to multiple identical events by different clusters that were synchronized in \(\Xb(t)\) or multiple identical events by difference cells that were in the same cluster in \(\Xb(t)\), for a time \(\epsilon_0<t<T+\epsilon_1 \), where \(\epsilon_0\geq0\) and \(\epsilon_1\geq0\) are small. The set of such \(\hat{\Xb}\) and \(\hat{\xb}\) is called an \textbf{event small} neighborhood of \(\Xb(0)\). For cells in \(\hat{\xb}(0)\), we also require that each cohort close to a cluster in \(\Xb(0)\) have the same number of cells as that cluster. 
\end{definition}

For example, suppose \(\Xb(t)\) has order of events \(\mathbf{s_1r_1s_2r_2}\). Then a cluster or pair of clusters will cross \(s_1\), call this group the \textit{first cohort}, then another cluster or pair of clusters will cross \(r_1\), call this the \textit{second cohort}. Then a cohort will cross \(s_2\), then one will cross \(r_2\), and then finally one will cross \(1\). If \(\hat{\xb}(0)\) is event-close to \(\Xb(0)\), then \textit{every} cell in its first cohort will cross \(s_1\) before \textit{any} cell in its second cohort crosses \(r_1\). Then every cell in the second cohort will cross \(r_1\) before any cell in the third cohort crosses \(s_2\), and so on.


\subsection{Useful Results About Roots of Polynomials}
\label{sec:roots}

In order to prove asymptotic stability of periodic orbits in later sections, we will need to show that the roots of relevant characteristic polynomials are inside the unit disk in $\mathbb{C}$. 
The following theorem will allow us to bound the roots of certain polynomials.

\begin{theorem} [\citet{Anderson}] \label{thrm:Anderson}
    Let $p_n(z)=\sum_{j=0}^{n}a_jz^j$, $n\geq 1$, be any polynomial with $a_i>0$ for all $0\leq i \leq n$. Setting
    \begin{align*}
        \alpha := \min_{0\leq i < n} \Big\{ \frac{a_i}{a_{i+1}} \Big\} \quad \text{ and } \quad  
          \beta := \max_{0\leq i < n} \Big\{ \frac{a_i}{a_{i+1}} \Big\},
    \end{align*}
then all the zeros of $p_n$ are contained in the annulus $\alpha \leq |z| \leq \beta$. 
\end{theorem}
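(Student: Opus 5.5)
The plan is the classical Eneström--Kakeya argument: reduce to the unit case by a rescaling, then compare $p_n$ with $(z-1)p_n(z)$ and use the triangle inequality.

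\emph{Upper bound.} Set $q(z)=(1-z)p_n(z)$. Expanding,
\[
q(z)=a_0+\sum_{j=1}^{n}(a_j-a_{j-1})z^j-a_nz^{n+1}.
\]
First assume $\beta\le 1$, i.e.\ $a_0\le a_1\le\cdots\le a_n$. For $|z|>1$ we bound the lower-order part:
\[
\Big|a_0+\sum_{j=1}^n(a_j-a_{j-1})z^j\Big|\le a_0+\sum_{j=1}^n(a_j-a_{j-1})|z|^j<a_n|z|^{n}\le a_n|z|^{n+1}.
\]
The middle inequality is strict because $|z|^j<|z|^n$ for $j<n$ and $a_0>0$. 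Hence $q(z)\neq 0$, and so $p_n(z)\ne0$, whenever $|z|>1$.

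For general $\beta$, rescale by $P(w)=p_n(\beta w)=\sum_j a_j\beta^j w^j$. The coefficient ratios of $P$ are $\frac{a_j\beta^j}{a_{j+1}\beta^{j+1}}=\frac{a_j}{a_{j+1}}\cdot\frac1\beta\le1$, so its coefficients are nondecreasing. By the first case, every zero $w$ of $P$ satisfies $|w|\le1$, and therefore every zero $z=\beta w$ of $p_n$ satisfies $|z|\le\beta$.

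\emph{Lower bound.} Apply the upper bound to the reversed polynomial $z^np_n(1/z)=\sum_j a_{n-j}z^j$. Its coefficient ratios are $\frac{a_{n-j}}{a_{n-j-1}}$, the reciprocals of the original ratios, so their maximum is $1/\alpha$. Since $a_0>0$, zero is not a root of $p_n$. Thus every nonzero root $z$ of $p_n$ has $1/|z|\le 1/\alpha$, i.e.\ $|z|\ge\alpha$.

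\emph{Main obstacle.} The only delicate step is the strict inequality in the triangle-inequality estimate. It requires the telescoping identity $a_0+\sum_{j=1}^n(a_j-a_{j-1})=a_n$ together with nonnegativity of every increment $a_j-a_{j-1}$, which is exactly what the normalization $\beta\le1$ provides. Everything else is bookkeeping under the rescaling and reversal. Since this is a cited classical result, in the paper it would suffice to quote it, but the sketch above is self-contained.
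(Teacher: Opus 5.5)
Your proof is correct. It is the classical Eneström--Kakeya argument, and each step holds: the telescoping bound (strict because of the term $a_0<a_0|z|^n$), the rescaling $P(w)=p_n(\beta w)$, and the reversal that gives the inner radius $\alpha$.

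The paper gives no proof and simply cites \cite{Anderson}, so the difference is self-containment rather than method. The citation is economical. The cited work also supplies the sharpness refinement stated as Corollary~\ref{cor:AndersonCor1}, which is what the paper actually uses in Lemma~\ref{lem:23stab}. There $\beta=1$ exactly, so the annulus bound alone only yields $|\lambda|\le 1$.

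Your route gives that refinement with one extra step, by tracking the equality case of your triangle inequality:
\begin{itemize}
\item Normalize to $\beta=1$, so the coefficients $b_j=a_j\beta^j$ are nondecreasing, and suppose $w$ is a zero with $|w|=1$.
\item Then your triangle inequality must be an equality. Since the term $b_0>0$ is real and positive, this forces $w^j=1$ for every $j$ with $b_j>b_{j-1}$, and then $w^{n+1}=1$.
\item If $a_0/a_1<\beta$, then $b_1>b_0$, so $j=1$ is such an index and $w=1$.
\item This is impossible, because $p_n(\beta)=\sum_j a_j\beta^j>0$.
\end{itemize}
With the equality case tracked, your argument covers both the theorem and the strict version that the stability proof depends on.
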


\begin{corollary}[\cite{Anderson}] \label{cor:AndersonCor1}
If the coefficients of $p$ satisfy $\displaystyle \frac{a_0}{a_1} < \beta$, 
then all zeros of $p_n$ satisfy $|z |< \beta$.
\end{corollary}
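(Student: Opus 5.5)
The plan is to rescale so that $\beta$ becomes $1$ and then sharpen the Eneström--Kakeya-type argument underlying \cref{thrm:Anderson} on the unit circle. By \cref{thrm:Anderson}, every zero of $p_n$ already satisfies $|z|\le\beta$. So it only remains to exclude zeros with $|z|=\beta$, and the hypothesis $a_0/a_1<\beta$ is exactly what will do this.

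First set $q(z)=p_n(\beta z)=\sum_{j=0}^n b_j z^j$ with $b_j=a_j\beta^j>0$. Then
\[
\frac{b_j}{b_{j+1}}=\frac{a_j}{\beta\,a_{j+1}}\le 1
\]
for all $j$, by the definition of $\beta$. Hence $b_0\le b_1\le\cdots\le b_n$, and the hypothesis $a_0/a_1<\beta$ gives the strict inequality $b_0<b_1$. Zeros of $p_n$ on $|z|=\beta$ correspond exactly to zeros of $q$ on $|z|=1$, so it suffices to show that $q$ has no zero with $|z|=1$.

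Next multiply by $(1-z)$ and telescope:
\[
(1-z)q(z)=b_0+\sum_{j=1}^{n}(b_j-b_{j-1})z^j-b_nz^{n+1}.
\]
Suppose $q(z)=0$ with $|z|=1$. Since all $b_j>0$, we have $q(1)>0$, so $z\neq1$. The identity then gives
\[
b_nz^{n+1}=b_0+\sum_{j=1}^n(b_j-b_{j-1})z^j.
\]
The right-hand side has modulus at most $b_0+\sum_{j\ge1}(b_j-b_{j-1})=b_n$, using that each difference is nonnegative, while the left-hand side has modulus exactly $b_n$. So equality holds in the triangle inequality.

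The main step, though a short one, is extracting a contradiction from this equality case. Equality forces every term with a nonzero coefficient to point in a common direction. The term $b_0$ is a positive real, so that common direction is the positive real axis. The term $(b_1-b_0)z$ has a strictly positive coefficient by our hypothesis, so $z$ itself must be a positive real of modulus $1$, i.e.\ $z=1$. This contradicts $z\neq1$. Therefore $q$ has no zeros on the unit circle, and every zero of $p_n$ satisfies $|z|<\beta$.
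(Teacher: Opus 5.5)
Your proof is correct. After rescaling you have $b_0\le b_1\le\cdots\le b_n$ with $b_0<b_1$. At any zero on $|z|=1$, the telescoped identity for $(1-z)q(z)$ forces equality in the triangle inequality. The strictly positive coefficient $b_1-b_0$ then forces $z=1$, which is impossible because $q(1)>0$.

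The paper gives no proof of this corollary and simply cites Anderson et al. Your argument is the standard equality-case analysis of the Enestr\"om--Kakeya bound, which is the mechanism behind the cited description of zeros on $|z|=\beta$. You have specialized it to the case where the strict inequality $a_{j-1}/a_j<\beta$ holds at $j=1$. It therefore gives a self-contained justification of what the paper actually needs for the switching-model polynomial $s(\lambda)$, where $\beta=1$ and $a_0/a_1=1/\alpha<1$.
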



\section{The Mode 1:2} \label{sec:12}
In a mode 1:2 \ac{PCS}, 
the population of cells is evenly divided into 2 daughter clusters and 1 mother cluster,
where one of the daughter clusters is synchronized with the single mother cluster. 
We will denote the lone daughter cluster as $D_0$, 
and the synchronized daughter and mother clusters as $D_1$ and $M$ respectively.
We will use the set $\{ M = g \}$ as the Poincar\'{e} section.
Thus, if a \ac{PCS} exists, it has the initial condition 
$(D_0(0) , D_1(0) , M(0)) = (0 , g , g)$ 
for some $g$. We know from \cref{thm:existence} there exists such a $g$ that gives a \ac{PCS}, 
but the theorem does not specify the order of events.
In this section, we will examine the order $\bf{r_1 s_1 r_2 s_2 1}$ 
and prove the existence of a 1:2 \ac{PCS} with this order that is asymptotically stable.

\subsection{The Order of Events \texorpdfstring{$\bf{r_1 s_1 r_2 s_2 1}$}{r_1 s_1 r_2 s_2 1}}

In order to achieve the order $\bf{r_1 s_1 r_2 s_2 1}$, 
we must have that $g$ lies between $s_2$ and $r_1$.
First, the pair $(D_1, M)$ reaches the point $r_1$, then $D_0$ reaches $s_1$, 
then the pair $(D_1, M)$ reaches $r_2$, followed by $D_0$ reaching $s_2$ 
and finally the pair $(D_1, M)$ reaches 1. We illustrate the positioning
of initial conditions in relation to parameter values in \cref{fig:12initial}.
 
\begin{figure}[ht]
\begin{center} 
\scalebox{0.7}{
\begin{tikzpicture}
\begin{scope}
\definecolor{qqqqff}{rgb}{0.,0.,1.}
\definecolor{ffqqqq}{rgb}{1.,0.,0.}
\draw[line width=1pt] (0,0) circle (3cm);
\large
\foreach \angle / \label in 
{ 90/$1 \sim 0$,-80 /$g
$ , -90/$r_1$ ,-160/$r_2$ , 70/$s_1$, -30/$s_2$}
{
\draw[line width=2pt] (\angle:3.25cm) -- (\angle:2.75cm);
\draw (\angle:3.9cm) node{\textsf{\label}};
}

\shade[ball color=qqqqff] (-80:2.65) circle (3.5mm);
\shade[ball color=ffqqqq] (-80:3.35) circle (3.5mm);
\shade[ball color=qqqqff] (90:2.65) circle (3.5mm);
\draw[thick][red,->] (-2,4) arc (110:70:6cm) ;
\end{scope}
\end{tikzpicture}
}
\end{center}
\caption{A diagram showing the initial condition for the 1:2 \ac{PCS} with parameter choices that reflect the order $\bf{r_1 s_1 r_2 s_2 1}$. 
Blue spheres represent daughter clusters while the red sphere represents a mother cluster.}
\label{fig:12initial}
\end{figure}


We will now give an explicit formula for $g$ where $s_2 < g < r_1$  given a set of parameters $\{ s_1, s_2, r_1, r_2 \}$.
Throughout this section, we will use the abbreviation $\alpha = 1 + \rho (1/3)$,
which represents the speed of progression of cells in region $R$ when the $1/3$ of the total cells from the lone daughter cluster are currently in $S$. Otherwise, all clusters will progress with rate 1.
\begin{lemma}\label{lem:12g}
    If a mode 1:2 \ac{PCS} follows the order $\bf{r_1 s_1 r_2 s_2 1}$ with positive feedback and $s_2 < g < r_1$, 
    then 
    \begin{align}\label{eqn:12g}
        \boxed{g = 
        \frac{\alpha + (1 - \alpha)( r_2-s_1)}{1 + \alpha}}.
    \end{align}
\end{lemma}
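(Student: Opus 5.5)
The plan is to derive \cref{eqn:12g} from a single timing identity. In a mode $1:2$ \ac{PCS}, the lone daughter cluster $D_0$ must arrive at $g$ at exactly the moment the synchronized pair $(D_1,M)$ arrives at $1$. We start on the Poincar\'e section with $(D_0(0),D_1(0),M(0))=(0,g,g)$ and let $T$ be the time at which the pair reaches $1$. At that moment $M$ jumps back to $g$ and $D_1$ goes to $0$ (relabelled as the new lone daughter), so the configuration returns to $(0,g,g)$ up to relabelling exactly when $D_0(T)=g$. Since $g<r_1$, the cluster $D_0$ never enters $R$ on $[0,T]$, so it moves with speed $1$ and $D_0(t)=t$. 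The \ac{PCS} condition therefore reduces to $T=g$.

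Next I would compute $T$ piecewise, using the order of events $\bf{r_1 s_1 r_2 s_2 1}$ to decide which speed applies on each stretch.
\begin{itemize}
\item From $g$ to $r_1$ the pair lies outside $R$, so it moves with speed $1$ and reaches $r_1$ at $t=r_1-g$.
\item Up to the event $\bf s_1$, which happens at $t=s_1$ because $D_0(t)=t$, no cluster is in $S$. The pair is in $R$ but only $D_0$ could ever signal, so the speed is still $1$. At $t=s_1$ the pair is at $g+s_1$.
\item From $t=s_1$ until the pair reaches $r_2$, $D_0$ is in $S$. The pair is in $R\subset(s_2,1)$ and hence not in $S$, so $I=1/3$ and the pair moves at speed $\alpha=1+\rho(1/3)$. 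It covers the distance $r_2-s_1-g$, which takes time $(r_2-s_1-g)/\alpha$. The order guarantees this stretch ends before the event $\bf s_2$.
\item From $r_2$ to $1$ the pair moves with speed $1$ again, taking time $1-r_2$.
\end{itemize}
Adding these gives
\[
T = s_1 + \frac{r_2-s_1-g}{\alpha} + 1 - r_2 .
\]

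Finally I set $T=g$, multiply through by $\alpha$, and collect the terms in $g$. This gives $(1+\alpha)g=\alpha+\alpha s_1-\alpha r_2+r_2-s_1=\alpha+(1-\alpha)(r_2-s_1)$, which is \cref{eqn:12g}.

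The main obstacle is not the algebra but justifying the speed assignment on each time interval from the assumed event order. The key points are that $I$ equals $1/3$ exactly while $D_0\in S$ and $0$ otherwise, since $D_1$ and $M$ lie in $[g,1)$ with $g>s_2$, and that positive feedback with $\rho(0)=0$ makes the speed $1$ when $I=0$. I would state these inequalities explicitly, namely $r_1-g<s_1$ and the pair reaching $r_2$ before time $s_2$, as the hypotheses encoded in the order of events.
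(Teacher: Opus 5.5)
Your proposal is correct and follows essentially the same route as the paper. Both arguments track $D_0(t)=t$, use the event order to assign speeds and compute the pair's arrival time at $1$ piecewise as $t^*=1+s_1-r_2+(r_2-s_1-g)/\alpha$, then solve $t^*=g$. The only differences are cosmetic: you merge the stretch after $\mathbf{r_2}$ into a single interval instead of splitting it at $\mathbf{s_2}$ (which does not change any speed), and you state explicitly why $I\in\{0,1/3\}$, which the paper leaves implicit.
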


\begin{proof}
Assume $s_2 < g < r_1$ and
\begin{align*}
    D_0(0) &= 0, \\
    D_1(0) &= M(0) = g.
\end{align*}
We will let $\{t_i\}$ denote the times at which events occur for the \ac{PCS}, starting from $t_0 = 0$. 

We begin with the event $r_1$, defining the positions of each cluster for $0 \leq t \leq t_1$ 
where $t_1$ is defined by $D_1(t_1) = M(t_1) = r_1$. 
No clusters are in $R$ or $S$ during this interval, and so all clusters are progressing with rate 1. 
Thus, for $0 \leq t \leq t_1$, we have
\[
\begin{aligned}
    D_0(t) &= t, \\
    D_1(t) &= M(t) = g + t, \quad \text{thus} \quad \boxed{t_1 = r_1 - g}.
\end{aligned}
\]

For $t_1 \leq t \leq t_2$, where $t_2$ is defined by $D_0(t_2) = s_1$, 
no clusters are in $S$, so all clusters are progressing with rate 1. 
Thus, for $t_1 \leq t \leq t_2$, we have
\[
\begin{aligned}
    D_0(t) &= t, \quad \text{thus} \quad \boxed{t_2 = s_1}, \\
    D_1(t) &= M(t) = g + t.
\end{aligned}
\]
Note that when $D_0$ enters $S$, clusters $D_1$ and $M$ are already in region $R$, 
so $D_1$ and $M$ will begin to experience feedback and will progress with rate 
$\alpha$. 

For $t_2 \leq t \leq t_3$, where $t_3$ is defined by $D_1(t_3) = M(t_3) = r_2$,
\[
\begin{aligned}
    D_0(t) &= t, \\
    D_1(t) &= M(t) = \alpha (t - s_1) + s_1 + g , \quad \text{thus} \quad \boxed{t_3 = \frac {r_2 - s_1 - g} {\alpha} + s_1}.
\end{aligned}
\]

For $t_3 \leq t \leq t_4$, where $t_4$ is defined by $D_0(t_4) = s_2$, 
no clusters are in $R$, and so all clusters are progressing with rate 1. 
Thus, for $t_3 \leq t \leq t_4$, we have
\[
\begin{aligned}
    D_0(t) &= t, \quad \text{thus} \quad \boxed{t_4 = s_2} \\
    D_1(t) &= M(t) = t - \frac {r_2 - s_1 - g} {\alpha} - s_1 + r_2.
\end{aligned}
\]

For $t_4 \leq t \leq t^*$ such that $D_1(t^*) = M(t^*) = 1$, 
all clusters are still progressing with rate 1, so we have
\[
\begin{aligned}
    D_0(t) &= t, \\
    D_1(t) &= M(t) = t - \frac {r_2 - s_1 - g} {\alpha} - s_1 + r_2, 
        \quad \text{thus} \quad \boxed{t^* = 1 + s_1 - r_2 + \frac {r_2 - s_1 - g} {\alpha}}.
\end{aligned}
\]

At $t^*$, $D_0$ must be at $g$ while $D_1$ and $M$ will be at 1. Thus,
\[
\begin{aligned}
    D_0(t^*) &= t^* = 1 + s_1 - r_2 + \frac {r_2 - s_1 - g} {\alpha} = g, \\
    D_1(t^*) &= M(t^*) = 1.
\end{aligned}
\]
By solving the equation $D_0(t^*) = g$, we obtain \cref{eqn:12g}.
\end{proof}

Note that in the final step of this proof we found that the time of the final event can be written as $t^* = g$. 
The following shows that the conditions required for the proof of \cref{lem:12g} can be achieved.

\begin{lemma}\label{lem:12anypos}
    For any positive feedback, there exists a non-empty set of parameters \\
    $\{s_1, s_2, r_1, r_2\}$ 
    for which the order $\bf{r_1 s_1 r_2 s_2 1}$ can be followed by a 1:2 \ac{PCS}.
\end{lemma}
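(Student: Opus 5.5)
The plan is to turn the statement into a finite list of strict inequalities on $\{s_1,s_2,r_1,r_2\}$ and then write down explicit parameters that satisfy them. Fix any positive feedback $\rho$ and set $\alpha = 1+\rho(1/3) \ge 1$, as before.

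\textbf{Step 1: the constraints.} The computation in the proof of \cref{lem:12g} is valid exactly when the event times satisfy $0<t_1<t_2<t_3<t_4<t^*$, together with $s_2<g<r_1$. Using the boxed formulas and $t^*=g$, these become
\begin{align*}
 r_1 - g < s_1, \qquad g+s_1 < r_2, \qquad s_1 + \frac{r_2-s_1-g}{\alpha} < s_2, \qquad s_2 < g < r_1,
\end{align*}
with $g$ given by \cref{eqn:12g} and $0<s_1<s_2<r_1<r_2<1$. The first inequality says the pair $(D_1,M)$ is already in $R$ when $D_0$ enters $S$. The second says the pair is still in $R$ at that moment. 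The third says the pair leaves $R$ before $D_0$ leaves $S$. I would also check two further conditions: that $D_0$ stays out of $R$ before $t^*$, which holds because $t^*=g<r_1$, and that no cluster is in $S$ while another is in $R$ outside the stated interval, which also follows from these inequalities.

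\textbf{Step 2: reparametrize.} Write $L=r_2-s_1$, so that $g=\bigl(\alpha-(\alpha-1)L\bigr)/(1+\alpha)$ depends only on $L$. The condition $g+s_1<r_2$ becomes $g<L$. A short computation shows this is equivalent to $L>1/2$. Now take $L=1/2+\varepsilon$ with $\varepsilon>0$ small. Then $g$ lies slightly below $1/2$, since $g\to 1/2$ as $\varepsilon\to 0$. Moreover $\delta:=(L-g)/\alpha$ is small and positive.

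\textbf{Step 3: explicit choice.} Take $s_1=1/4$, so $r_2=3/4+\varepsilon<1$. Choose $s_2\in(1/4+\delta,\,g)$, which is nonempty for small $\varepsilon$ because $g\approx 1/2$. Choose $r_1\in(g,\,g+1/4)$; this interval lies below $r_2=3/4+\varepsilon$ once $\varepsilon$ is small. Every inequality in Step 1 then holds, and all of them are strict. Hence the admissible parameters form a nonempty open set, which proves the lemma.

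The main obstacle is Step 1: identifying the complete and correct list of ordering constraints, so that no hidden requirement is missed. An example would be a daughter cluster passing $r_1$ early, or a feedback switch not anticipated in \cref{lem:12g}. Once that list is fixed, the choice $L\approx 1/2$ makes the verification routine.
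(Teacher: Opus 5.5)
Your proposal is correct and follows the paper's argument. Both proofs reduce the lemma to the chain $0<r_1-g<s_1<s_1+\frac{r_2-s_1-g}{\alpha}<s_2<g$, write $g$ as an affine function of $r_2-s_1$, require $r_2-s_1>1/2$ so that $g<r_2-s_1$, and then pick $r_1\in(g,g+s_1)$ and $s_2\in\bigl(s_1+\frac{r_2-s_1-g}{\alpha},\,g\bigr)$. The only difference is that you fix $s_1=1/4$ and take $r_2-s_1$ near $1/2$, so the $s_2$-interval is nonempty because $g\approx 1/2$, whereas the paper checks nonemptiness for every admissible $s_1,r_2$ via $(1+\alpha)g=(r_2-s_1)+\alpha s_1+\alpha(1-r_2)$; in your parametrization this condition is exactly $\varepsilon<1/4$, i.e.\ $r_2<1$.
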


\begin{proof}
We need to show that there exist $\{s_1, s_2, r_1, r_2\}$ so that:
\begin{enumerate}
    \item The parameters are in the order we assumed, 
    $0 < s_1 < s_2 < r_1 < r_2 < 1$.

    \item The times of events in the proof of \cref{lem:12g} occur in the order assumed, \\
    $0 < t_1 < t_2 < t_3 < t_4 < t^*$, 
    which means 
    $\displaystyle 0 < r_1 - g < s_1 < \frac {r_2 - s_1 - g} {\alpha} + s_1 < s_2 < g$.

    \item The $g$ in \cref{lem:12g} is after $S$ and before $R$, 
    as we assumed $s_2 < g < r_1$.
\end{enumerate}

First select 
$\displaystyle s_1 \in (0 , \frac{1}{1 + \alpha}) \subset (0,\frac{1}{2})$
and then select
$\displaystyle r_2 \in (\frac{1}{2} + s_1 , 1)$.
Thus we have
$\displaystyle \frac{1}{2} < r_2 - s_1 < 1$.
We will substitute $\delta = r_2 - s_1$ as the parameters are assumed fixed, and we can now write $g$ as a function of $\delta$, yielding
\[
g(\delta) =
\frac{\alpha + \delta (1 - \alpha)}{1 + \alpha}.
\]
As the quantity 
$\displaystyle 1 - \alpha$ 
is negative, we can see that $g(\delta)$ is affine decreasing. 
Thus, we can find the maximum and minimum values for $g$ by evaluating the limits as $\delta$ approaches its lower bound (maximum) and it's upper bound (minimum):
\begin{align*}
    \lim_{\delta \to \frac{1}{2}^+} g(\delta) &= 
    \frac{\frac{1}{2} + \frac{1}{2} \alpha}{1 + \alpha}  =
    \frac{1}{2},
    \\
    \lim_{\delta \to 1^-} g(\delta) &= 
    \frac{1}{1 + \alpha}.
\end{align*}
Therefore, the $g$ in \cref{eqn:12g} satisfies
\[
s_1 < \frac{1}{1 + \alpha} < g < \frac{1}{2} < r_2.
\]

Note that we have thus satisfied $0 < s_1 < g < r_2 < 1$. 
We will now define $s_2$ and $r_1$ so that the times of events occur in our desired order, 
then we will return to this inequality to show that the parameters are in the order we previously assumed.

In order to verify that the times of events occur in our desired order, we will examine and verify $\displaystyle 0 < r_1 - g < s_1 < \frac {r_2 - s_1 - g} {\alpha} + s_1 < s_2 < g$.
First we select 
$\displaystyle r_1 \in (g , s_1 + g) 
$. 
Selecting $r_1$ in this way verifies the first inequality $0 < r_1 - g < s_1$. 
The following inequality, 
$\displaystyle s_1 < \frac{r_2 - s_1 - g}{\alpha} + s_1$ 
is easily verifiable as $r_2 - s_1 > \frac{1}{2} > g$.
Last we wish to select
$\displaystyle s_2 \in (\frac {r_2 - s_1 - g} {\alpha} + s_1 , g) \subset (0, \frac{1}{2})$, which would automatically verify the final inequality, but in order for this choice to be valid we must have that $\displaystyle \frac {r_2 - s_1 - g} {\alpha} + s_1 < g$. Examining this inequality and substituting $\delta = r_2 - s_1$, we can rewrite it as
\begin{align*}
    \delta - g + \alpha s_1 &< \alpha g, \\[4pt]
    \delta + \alpha s_1 &< g(1 + \alpha), \\[4pt]
    \frac{\delta + \alpha s_1}{1 + \alpha} & < g = \frac{\alpha + \delta (1 - \alpha)}{1 + \alpha}
    = \frac{\delta + \alpha (1 - \delta)}{1 + \alpha}
    = \frac{\delta + \alpha s_1 + \alpha (1 - r_2)}{1 + \alpha}.
\end{align*}
Because $1 - r_2 > 0$, the interval $ \displaystyle (\frac {r_2 - s_1 - g} {\alpha} + s_1 , g)$ is non-empty, and so we can select $s_2$ in the desired way to verify the final inequality.

Now that we have selected values for each parameter so that the time of each event occurs in the assumed order, we must check that the parameters themselves are in the assumed order. 
We already know that $0 < s_1 < g < r_2 < 1$, and it is easy to verify that $s_2$ lies between $s_1$ and $g$. 
We can also see clearly that $r_1 > g$, 
and with slightly more analysis we see that because $s_1 < r_2 - g$, we know that $r_1 < r_2$. 
Thus, we have guaranteed that 
$0 < s_1 < s_2 < g < r_1 < r_2 < 1$, 
proving both the first and third conditions.
\end{proof}


\subsection{Derivation of Maps for Mode 1:2 Solutions with Order \texorpdfstring{$\bf{r_1 s_1 r_2 s_2 1}$}{r_1 s_1 r_2 s_2 1}} 

Suppose $\Xb(t) = (D_0(t), D_1(t), M(t))$ is a 1:2 clustered solution that is event close to a 1:2 PCS. 
An initial condition on the Poincaré section has coordinates
\[
\begin{aligned}
D_0(0) &= d_1, 
\\
D_1(0) &= d_2, 
\\
M(0) &= g,
\end{aligned}
\]
where $d_0 \in (-\delta_0, \delta_0)$ and $d_1 \in (g-\delta_1, g+\delta_1)$. 
By $(-\delta_0, \delta_0)$ we mean a $\delta$ neighborhood of $0 \sim 1$ on the circle.

Let $t_{\bf f}^*$ be defined as the smallest positive time for which $M(t_{\bf f}^*) = 1$. 
We will define the map $F_{\bf f}$ for the $\bf C_f$ model as:
\[
F_{\bf f}(d_0, d_1) = (D_1(t_{\bf f}^*), D_0(t_{\bf f}^*)).
\]
Here $D_0$ and $D_1$ remain daughters and $D_1(t_{\bf f}^*)$ is in a neighborhood of the initial condition $d_0$ of $D_0$ while $D_0(t_{\bf f}^*)$ is in a neighborhood of $d_1$.

Let $t_{\bf s}^*$ be defined by $D_1(t_{\bf s}^*) = 1$. 
Then the map $F_{\bf s}$ map for $\bf C_s$ model is defined as:
\[
F_{\bf s}(d_0, d_1) = (M(t_{\bf s}^*), D_0(t_{\bf s}^*)).
\]
Here $M$ becomes a daughter in a neighborhood of $d_0$ while $D_1$ has become a mother and is at $g$.

Before beginning our derivation, we note that we cannot guarantee the order of the initial conditions $d_1$ and $g$.
Thus, we must consider both cases, $d_1 < g$ and $d_1 > g$, but we will see that both cases give the same result.
In either case, each of the events $r_1$ and $r_2$ occur twice by $D_1(t)$ and $M(t)$, while $s_1$ and $s_2$ occur just once by $D_0(t)$.

In our calculations, we will find the time values when clusters change speed, as other events do not change the equations of the positions of the clusters. 
Finally, we will calculate the final time step where either $M$ hits 1 (Fixed model), or $D_1$ hits 1 (Switching model), denoted as $t_{\bf f}^*$ or $t_{\bf s}^*$ respectively. 
Observe \cref{fig:12lineplot} for a visual interpretation of the derivation process. Furthermore, note that while we can see the clusters get closer than their initial conditions after $M$ and $D_1$ have passed 1 and moved to either 0 or $g$, this graph does not provide proof of this phenomena, and the full derivation of the eigenvalues of the map is necessary.

We can additionally observe this phenomena in our simulations by plotting the time evolution of a $1:2$ solution as seen in \cref{fig:12heatmap}. It is clear that the simulation supports such a visualization of the clustering process, though \cref{fig:12lineplot} exaggerates some elements to foster better comprehension of the events taking place.

\begin{figure}[h!]
\begin{tikzpicture}[scale=0.8]
\scalebox{1}{

\filldraw[color=blue!10, fill=blue!10] (0,2) rectangle (6,4.5);
\filldraw[color=red!10, fill=red!10] (0,7) rectangle (6,10);

\draw[->] (0,0) -- (6,0);
\draw[->] (0,0) -- (0,12.5);

\draw[line width = 2pt, orange] (0,6) node [anchor=east]{\(D_1(t)\)} -- (2,8);
\draw[line width = 2pt, orange] (2,8) -- (3,10);
\draw[line width = 2pt, orange] (3,10) -- (5,12);
\draw[line width = 2pt, dotted, orange] (5,12) -- (5,5.5);
\draw[->, line width = 2pt, orange] (5,5.5) -- (5.8,6.3);

\draw[line width = 2pt, blue] (0,5.5) node [anchor=west]{\(M(t)\)} -- (2,7.5);
\draw[line width = 2pt, blue] (2,7.5) -- (3.25,10);
\draw[line width = 2pt, blue] (3.25,10) -- (5.25,12);
\draw[line width = 2pt, dotted, blue] (5.25,12) -- (5.25,0);
\draw[->, line width = 2pt, blue] (5.25,0) -- (5.8,0.55);

\draw[->, line width = 2pt, red] (0,0.1) node [anchor=east]{\(D_0(t)\)} -- (5.8,5.9);

\draw[-] (-.2,5.5) node [anchor=east]{\(g\)} -- (.2,5.5);

\draw[-] (-.2,10) node [anchor=east]{\(r_2\)} -- (.2,10);
\draw[-, dashed] (0,10) -- (3.25,10);

\draw[-] (-.2,7) node [anchor=east]{\(r_1\)} -- (.2,7);

\draw[-] (-.2,2) node [anchor=east]{\(s_1\)} -- (.2,2);
\draw[-, dashed] (0,2) -- (2,2);
\draw[-, dashed] (2,0) -- (2,8);

\draw[-] (-.2,4.5) node [anchor=east]{\(s_2\)} -- (.2,4.5);

\draw[-] (-.2,12) node [anchor=east]{1} -- (6,12);

\draw[-] (2,-.2) node [anchor=north]{\(t_1\)} -- (2,.2);

\draw[-] (3,-.2) node [anchor=north]{\(t_2 \ \)} -- (3,.2);
\draw[-, dashed] (3,0) -- (3,10);

\draw[-] (3.25,-.2) node [anchor=north]{\(t_3\)} -- (3.25,.2);
\draw[-, dashed] (3.25,0) -- (3.25,10);

\draw[-] (5,-.2) node [anchor=north]{\(t_{\bf s}^*\)} -- (5,.2);
\draw[-, dashed] (5,0) -- (5,12);
}
\end{tikzpicture}
\begin{tikzpicture}[scale=0.8]
\scalebox{1}{

\filldraw[color=blue!10, fill=blue!10] (0,2) rectangle (6,4.5);
\filldraw[color=red!10, fill=red!10] (0,7) rectangle (6,10);

\draw[->] (0,0) -- (6,0);
\draw[->] (0,0) -- (0,12.5);

\draw[line width = 2pt, orange] (0,6) node [anchor=east]{\(D_1(t)\)} -- (2,8);
\draw[line width = 2pt, orange] (2,8) -- (3,10);
\draw[line width = 2pt, orange] (3,10) -- (5,12);
\draw[line width = 2pt, dotted, orange] (5,12) -- (5,0);
\draw[->, line width = 2pt, orange] (5,0) -- (5.8,0.8);

\draw[line width = 2pt, blue] (0,5.5) node [anchor=west]{\(M(t)\)} -- (2,7.5);
\draw[line width = 2pt, blue] (2,7.5) -- (3.25,10);
\draw[line width = 2pt, blue] (3.25,10) -- (5.25,12);
\draw[line width = 2pt, dotted, blue] (5.25,12) -- (5.25,5.5);
\draw[->, line width = 2pt, blue] (5.25,5.5) -- (5.8,6.05);

\draw[->, line width = 2pt, red] (0,0.1) node [anchor=east]{\(D_0(t)\)} -- (5.8,5.9);

\draw[-] (-.2,5.5) node [anchor=east]{\(g\)} -- (.2,5.5);

\draw[-] (-.2,10) node [anchor=east]{\(r_2\)} -- (.2,10);
\draw[-, dashed] (0,10) -- (3.25,10);

\draw[-] (-.2,7) node [anchor=east]{\(r_1\)} -- (.2,7);

\draw[-] (-.2,2) node [anchor=east]{\(s_1\)} -- (.2,2);
\draw[-, dashed] (0,2) -- (2,2);
\draw[-, dashed] (2,0) -- (2,8);

\draw[-] (-.2,4.5) node [anchor=east]{\(s_2\)} -- (.2,4.5);

\draw[-] (-.2,12) node [anchor=east]{1} -- (6,12);

\draw[-] (2,-.2) node [anchor=north]{\(t_1\)} -- (2,.2);

\draw[-] (3,-.2) node [anchor=north]{\(t_2 \ \)} -- (3,.2);
\draw[-, dashed] (3,0) -- (3,10);

\draw[-] (3.25,-.2) node [anchor=north]{\(t_3\)} -- (3.25,.2);
\draw[-, dashed] (3.25,0) -- (3.25,10);

\draw[-] (5.25,-.2) node [anchor=north]{\(t_{\bf f}^*\)} -- (5.25,.2);
\draw[-, dashed] (5.25,0) -- (5.25,12);
}
\end{tikzpicture}
\caption{Visualizing the positions and movement of the clusters $D_0$, $D_1$ and $M$ under the switching (left) and fixed (right) clustered models. When a cluster is in the blue shaded region which denotes $S$, it signals to the clusters in the red shaded region $R$ to apply positive feedback and progress at rate $\alpha > 1$. The time axis notes all time values when cluster speeds change, which are calculated in the derivation of the maps. The distance between $D_1(0)$ and $M(0)$ has been exaggerated to better view the phenomena of the distance between the clusters decreasing under positive feedback with the order $\bf r_1 s_1 r_2 s_2 1$. One can also notice the movements of $D_1$ and $M$ after they pass 1, where in the fixed model $M$ goes to $g$ and stays a mother cluster, while in the switching model $D_1$ goes to $g$ and becomes a mother cluster.}
\label{fig:12lineplot}
\end{figure}

\begin{figure}[ht] 
  \includegraphics[width=\textwidth]{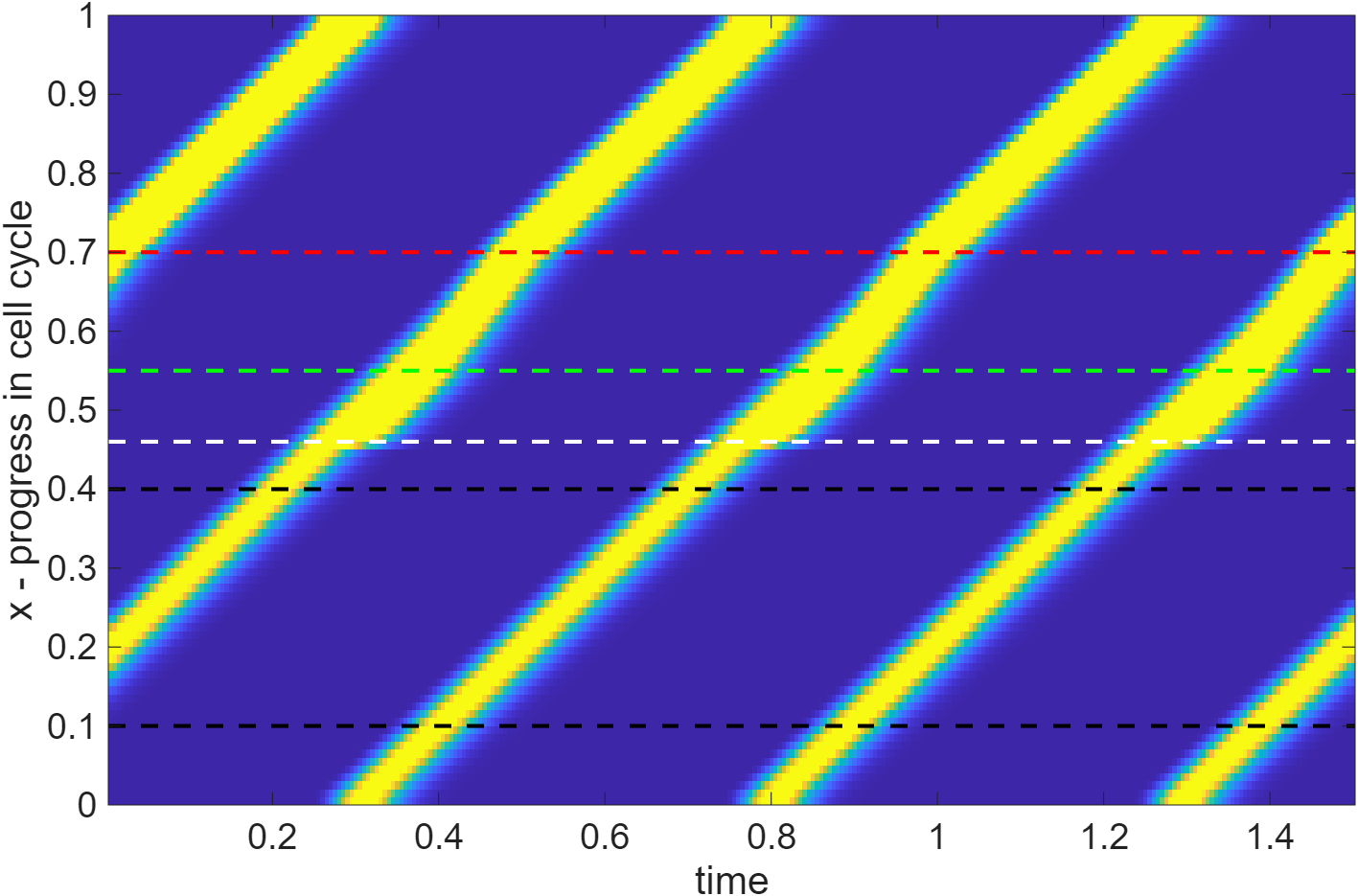}
\caption{Density plot showing the time evolution of a $1:2$ not-perfectly clustered solution of  \cref{basiccell,I} 
with assumption $\bf c_a$ has formed. 
This solution is similar to that seen in the left panel of \cref{fig:12plot}, but with $\rho(I) = 1$ and $g=0.46$. 
These changes exaggerate the change in slope within $R$ and the dispersion of clusters near $g$.
The boundaries of the $S$ region, 
at $s_1 = 0.1$ and $s_2 = 0.4$ are displayed as black dashed lines. 
The green dashed line at $x = 0.55$ is $r_1$, 
while $r_2 = 0.7$ is shown as a red line 
and $g = 0.46$ is shown as a white line.
The order of events for this solution is $\mathbf{r_1 s_1 r_2 s_2 1}$. 
At $g$, a mother cluster joins the daughter cluster, 
but slightly behind the daughter cluster 
and this causes the clusters to disperse.
When a cohort of mother/daughter cells crosses $r_1$, 
there is are no cells in $S$. 
While the mother/daughter cohort is still in $R$, a daughter cluster enters $S$ and this causes all the cells in the $R$ to speed up uniformly. 
When the mother/daughter pair leaves $R$ they slow down, thus compressing.
}
\label{fig:12heatmap}
\end{figure}


Let $t_1$ denote the time at which $D_0(t_1) = s_1$. 
Then the interval $0 \leq t \leq t_1$ 
includes the first event of the pair $(D_1 , M)$ passing $r_1$, but $D_0$ is not in $S$ yet, so all clusters are progressing with rate 1. 
Thus, for $0 \leq t \leq t_1$, we have
\[
\begin{aligned}
    D_0(t) &= d_0 + t,
    \quad \text{thus} \quad
    \boxed{t_1 = s_1 - d_0},
    \\
    D_1(t) &= d_1 + t,
    \\
    M(t) &= g + t.
\end{aligned}
\]

For $t_1 \leq t \leq t_2$, where $t_2$ is defined by $M(t_2) = r_2$, 
$D_0$ will still be in $S$, while $D_1$ and $M$ will be in $R$ progressing at a rate of 
$\alpha$. 
Thus, for $t_1 \leq t \leq t_2$,
\[
\begin{aligned}
    D_0(t) &= d_0 + t,
    \\
    D_1(t) &= \alpha (t - s_1 + d_0) + s_1 - d_0 + d_1,
    \\
    M(t) &= \alpha (t - s_1 + d_0) + s_1 - d_0 + g,
    \quad \text{thus} \quad
    \boxed{t_2 = \frac{r_2 - s_1 + d_0 - g}{\alpha} + s_1 - d_0}.
\end{aligned}
\]

For $t_2 \leq t \leq t_3$, where $t_3$ is defined by $D_1(t_3) = r_2$, 
$D_0$ will still be in $S$ progressing at rate 1, $M$ will have left $R$ and will be progressing at rate 1, and $D_1$ will be in $R$ progressing at a rate of 
$\alpha$. 
Thus, for $t_2 \leq t \leq t_3$,
\[
\begin{aligned}
    D_0(t) &= d_0 + t,
    \\
    D_1(t) &= \alpha(t - s_1 + d_0) + s_1 - d_0 + d_1,
    \quad \text{thus} \quad
    \boxed{t_3=\frac{r_2 - s_1 + d_0 - d_1}{\alpha} + s_1 - d_0},
    \\
    M(t) &= (t - \frac{r_2 - s_1 + d_0 - g}{\alpha} - s_1 + d_0) + r_2.
\end{aligned}
\]

For $t_3 \leq t \leq t_4$, where $t_4$ is defined by $M(t_4) = 1$, 
$D_0(t)$ will have left $S$ during this interval but will still progress at rate 1, and both $M$ and $D_1$ will have already left $R$ and will progress at rate 1. 
Thus, for $t_3 \leq t \leq t_4$,
\begin{align}
\begin{split}
    D_0(t) &= d_0 + t,
    \\
    D_1(t) &= (t - \frac{r_2 - s_1 + d_0 - d_1}{\alpha} - s_1 + d_0) + r_2,
    \\
    M(t) &= (t - \frac{r_2 - s_1 + d_0 - g}{\alpha} - s_1 + d_0) + r_2,
    \\
    \quad &\text{thus} \quad
    \boxed{t_4 = 1 - r_2 + \frac{r_2 - s_1 + d_0 - g}{\alpha} + s_1 - d_0}
\label{eqn:12t4-}
\end{split}
\end{align}

We note here that none of our event time steps so far have had any dependence on the sign of $d_1 - g$, and thus these times are not dependent on our two possible cases of initial conditions.

\textbf{Now considering the fixed clustered model $\bf C_f$}, $M(t_4)$ has now arrived at 1 and goes immediately to $g$, 
so the solution is on the global Poincar\'{e} section $\{M = g\}$, 
but the positions of $D_0$ and $D_1$ have switched, so it is not in a neighborhood of the initial conditions. 
We have completed the map $F_{\bf f}$ and $t_{\bf f}^* = t_4$. 
Thus by evaluating the positions of the clusters at time $t_{\bf f}^*$, we find
\[
\begin{aligned}
    D_0(t_{\bf f}^*) &= 1 - r_2 + s_1 + \frac{r_2 - s_1 + d_0 - g}{\alpha}, \\
    D_1(t_{\bf f}^*) &= 1 + \frac{d_1 - g}{\alpha} \quad \text{(mod 1)},\\
    M(t_{\bf f}^*) &= 1 \rightarrow g.
\end{aligned}
\]
We include that $D_1(t_{\bf f}^*)$ is mod 1 to account for each case of $d_1 - g$, as depending on the order of these initial conditions, it may be located after or before $1 \sim 0$. We will use the same convention later when we find when $D_1$ crosses 1.

Since the locations of $D_0$ and $D_1$ are permuted, $F_{\bf f}$ is given by 
\[ 
F_{\bf f}:(d_0, d_1)\longrightarrow (D_1(t_{\bf f}^*), D_0(t_{\bf f}^*)). 
\]
Thus by taking the Jacobian,
\begin{equation}
    DF_{\bf f} = \displaystyle
    \begin{pmatrix}
        0 & \dfrac{1}{\alpha} \\[8pt]
        \dfrac{1}{\alpha} & 0 
    \end{pmatrix}.
    \label{matrix:12DFf}
\end{equation}

\textbf{For the switching clustered model $\bf C_s$} when assuming $d_1 < g$, 
we have that $t_{\bf s}^*$ would be $t_5$ where $D_1(t_5) = 1$, 
thus we recall equations \cref{eqn:12t4-} and compute the solution for another time interval. 
When $t_4 \leq t \leq t_5 = t_{\bf s}^*$, everything is progressing at rate 1. So,
\[
\begin{aligned}
    D_0(t) &= t + d_0,
    \\
    D_1(t) &= (t - \frac{r_2 - s_1 + d_0 - d_1}{\alpha} - s_1 + d_0) + r_2,
    \\
    \quad &\text{thus} \quad
    \boxed{t_{\bf s}^* = 1 - r_2 + \frac{r_2 - s_1 + d_0 - d_1}{\alpha} + s_1 - d_0}
    \\
    M(t) &= (t - 1 + r_2 - \frac{r_2 - s_1 + d_0 - g}{\alpha} - s_1 + d_0).
\end{aligned}
\]
Thus, by evaluating at $t_{\bf s}^*$, we find
\[
\begin{aligned}
    D_0(t_{\bf s}^*) &= 1 - r_2 + s_1 + \frac{r_2 - s_1 + d_0 - d_1}{\alpha},
    \\
    D_1(t_{\bf s}^*) &= 1 \rightarrow g,
    \\
    M(t_{\bf s}^*) &= 1 + \frac{g - d_1}{\alpha} \quad \text{(mod 1)},
\end{aligned}
\]
which defines the map 
$F_{\bf s} : (d_0 , d_1) \longrightarrow (M(t_{\bf s}^*) , D_0(t_{\bf s}^*))$
with derivative
\begin{equation}
    DF_{\bf s} =
    \begin{pmatrix}
        0 & -\dfrac{1}{\alpha} \\[8pt]
        \dfrac{1}{\alpha} & -\dfrac{1}{\alpha}
    \end{pmatrix}.
    \label{matrix:12DFs}
\end{equation}


\subsection{Stability for a 1:2 PCS with order \texorpdfstring{$\bf{r_1 s_1 r_2 s_2 1}$}{r_1 s_1 r_2 s_2 1}}

\begin{lemma}
\label{lem:12stab}
    For the 1:2 \ac{PCS} with the order $\mathbf{r_1 s_1 r_2 s_2 1}$ and under positive feedback with $\alpha > 1$ in both clustered models $\bf C_f$ and $\bf C_s$, 
    all of the eigenvalues of the Jacobian matrices corresponding to the respective Poincaré maps are in the interior of the unit disk.
\end{lemma}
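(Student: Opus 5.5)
The plan is to compute the spectra of the two explicit Jacobians \cref{matrix:12DFf} and \cref{matrix:12DFs} directly, and then transfer the conclusion to the Poincaré maps $P_{\bf f}$ and $P_{\bf s}$ via the chain rule. Since the maps are piecewise affine near the \ac{PCS}, their derivatives are constant on an event small neighborhood. The two cases $d_1<g$ and $d_1>g$ yield the same formulas, modulo~1, so a single computation per model suffices.

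For the fixed model, $DF_{\bf f}$ has characteristic polynomial $\lambda^2 - 1/\alpha^2$. Its eigenvalues are $\lambda = \pm 1/\alpha$, which satisfy $|\lambda| = 1/\alpha < 1$ because $\alpha = 1+\rho(1/3) > 1$ under positive feedback.

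For the switching model, $DF_{\bf s}$ has trace $-1/\alpha$ and determinant $0\cdot(-1/\alpha) - (-1/\alpha)(1/\alpha) = 1/\alpha^2$. Its characteristic polynomial is therefore
\[
\lambda^2 + \tfrac{1}{\alpha}\lambda + \tfrac{1}{\alpha^2}.
\]
The discriminant $1/\alpha^2 - 4/\alpha^2$ is negative, so the roots are complex conjugates. Their common modulus squared equals the determinant, so $|\lambda| = 1/\alpha < 1$. As an alternative, one can apply \cref{thrm:Anderson} to $\alpha^2\lambda^2 + \alpha\lambda + 1$, whose coefficients are all positive with consecutive ratios $1/\alpha$ and $\alpha/\alpha^2 = 1/\alpha$; this gives $\beta = 1/\alpha < 1$, which confirms the bound.

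To pass from $F_\alpha$ to $P_\alpha$, I will use the factorization described in \cref{sec:returnmaps}. After identifying the sections $\{M=g\}$ at successive crossings, the Poincaré map is an iterate of $F$, namely $P_{\bf f} = F_{\bf f}^{\operatorname{lcm}(1,2)} = F_{\bf f}^2$ and $P_{\bf s} = F_{\bf s}^{3}$, as illustrated in \cref{fig:F12}. By the chain rule, and because the derivative is constant along the periodic orbit (every application of $F$ occurs near the same event order), $DP$ is a power of $DF$. Its eigenvalues are therefore $\lambda^k$ with $|\lambda|<1$, so they also lie in the open unit disk.

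The main obstacle is justifying this last step rigorously. I need $DF$ to be the same matrix at each intermediate point of the orbit, and I must handle the ``mod 1'' coordinate for $M$ and $D_1$ crossing $0$, which introduces only translations and does not change derivatives. I would argue both points through event closeness: every iterate stays in an event small neighborhood, so the same affine formulas apply at each step.
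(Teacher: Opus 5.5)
Your proposal is correct and matches the paper's proof: both compute the characteristic polynomials $\lambda^2-1/\alpha^2$ and $\lambda^2+\lambda/\alpha+1/\alpha^2$ of \cref{matrix:12DFf,matrix:12DFs} and show that every eigenvalue has modulus $1/\alpha<1$; your trace/determinant shortcut and the Anderson check are just other routes to the same moduli. The paper does not write out your extra step from $DF$ to $DP$, leaving it to the chain-rule remark in the introduction and \cref{sec:returnmaps}; that step is sound and easier than you expect, because $(0,g)$ is a fixed point of both $F_{\bf f}$ and $F_{\bf s}$ (this is exactly the defining relation in \cref{lem:12g}), so $D(F^k)(0,g)=\bigl(DF(0,g)\bigr)^k$ follows directly.
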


\begin{proof}
    For the model $\bf C_f$, we find that the characteristic polynomial of \cref{matrix:12DFf} is
    \[
        f(\lambda) = \lambda^2 - \frac{1}{\alpha^2}.
    \]
    The roots of this polynomial,
    \[
        |\lambda| = \left| \frac{1}{\alpha} \right| < 1, \quad \forall \ \alpha > 1.
    \]

    For the model $\bf C_s$, we find that the characteristic polynomial of \cref{matrix:12DFs} is
    \[
        s(\lambda) = \lambda^2 + \frac{1}{\alpha} \lambda + \frac{1}{\alpha^2}
    \]
    with roots,
    \[
        \lambda = \frac{-1 \pm i \sqrt{3}}{2 \alpha}.
    \]
    The moduli of these roots,
    \[
        |\lambda| = \sqrt{ \frac{1}{4\alpha^2} + \frac{3}{4\alpha^2}} = \frac{1}{\alpha} < 1, \quad \forall \ \alpha >1.
    \]
    
\end{proof}

Combining \cref{lem:12g}, \cref{lem:12anypos} and \cref{lem:12stab}, we obtain the following result:

\begin{theorem}
    In the models $\bf C_f$ and $\bf C_s$ under positive feedback, there exists a $1:2$ \ac{PCS} with order $\bf r_1 s_1 r_2 s_2 1$ that is asymptotically stable.
\end{theorem}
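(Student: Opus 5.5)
The plan is to assemble the theorem from the three preceding lemmas and then pass from the factor maps $F_{\bf f}$, $F_{\bf s}$ to the Poincar\'e maps $P_{\bf f}$, $P_{\bf s}$.

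First, fix any positive feedback function $\rho$ and set $\alpha = 1+\rho(1/3) > 1$; this is strict, since $\rho$ is monotonically increasing with $\rho(0)=0$. By \cref{lem:12anypos}, choose $s_1, r_2, r_1, s_2$ as in its proof, so that $0<s_1<s_2<g<r_1<r_2<1$. Then take $g$ given by \cref{eqn:12g} from \cref{lem:12g}. With these choices, the event times satisfy $0<t_1<t_2<t_3<t_4<t^*$. This gives a 1:2 \ac{PCS} with the order $\bf r_1 s_1 r_2 s_2 1$: the computation in \cref{lem:12g} shows that $D_0$ reaches $g$ exactly when the pair $(D_1,M)$ reaches $1$, so after relabeling the configuration returns to itself. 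The construction is the same in $\bf C_f$ and $\bf C_s$, since until time $t^*$ the dynamics do not depend on the convention used at $1$.

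Next, I would confirm that these parameters can be chosen with all inequalities strict, so that no two events coincide. Then for initial conditions $(d_0,d_1)$ sufficiently close to $(0,g)$ the perturbed solution is event close to the \ac{PCS}. Hence the formulas derived in the previous subsection are valid on a whole neighborhood $U$, in both cases $d_1<g$ and $d_1>g$. On $U$ the maps $F_{\bf f}$ and $F_{\bf s}$ are affine, with Jacobians \cref{matrix:12DFf} and \cref{matrix:12DFs}, and $(0,g)$ is a fixed point of each. By \cref{lem:12stab}, all eigenvalues of both Jacobians have modulus $1/\alpha<1$.

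The main obstacle is the last step: transferring this spectral bound from $F_\alpha$ to the Poincar\'e map $P_\alpha$. From the description of \cref{fig:F12}, $P_{\bf f}=F_{\bf f}^{2}$, since $\operatorname{lcm}(1,2)=2$, and $P_{\bf s}=F_{\bf s}^{3}$, since $p+q=3$. By the chain rule, $DP_\alpha=(DF_\alpha)^k$ at the fixed point, so the eigenvalues of $DP_\alpha$ are the $k$-th powers of those of $DF_\alpha$ and have modulus $\alpha^{-k}<1$. The care needed is in checking that the intermediate iterates stay within the event-small neighborhood where the affine formulas hold. Because $F_\alpha$ is a strict linear contraction in a suitable norm near the fixed point (its spectral radius is less than $1$), a small enough ball is mapped into itself, so all iterates use the same affine formulas. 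The piecewise affine $P_\alpha$ is therefore a contraction near the fixed point, and the \ac{PCS} is asymptotically stable in both $\bf C_f$ and $\bf C_s$.
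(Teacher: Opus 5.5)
Your proposal is correct and follows the paper's own route: the paper obtains the theorem by combining \cref{lem:12g}, \cref{lem:12anypos} and \cref{lem:12stab}, and passes from $F_\alpha$ to $P_\alpha$ by the chain rule, with $P_{\bf f}=F_{\bf f}^2$ and $P_{\bf s}=F_{\bf s}^3$, as indicated in the introduction and in \cref{fig:F12}. Your extra argument that the affine map $F_\alpha$ sends a small ball about $(0,g)$ into itself, so that every iterate stays in the event-small neighborhood where the formulas hold, is a step the paper leaves implicit.
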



\section{The Mode 2:3} \label{sec:23}

In a mode 2:3 \ac{PCS}, 
the population of cells is evenly divided into 3 daughter clusters and 2 mother clusters. 
We will denote these clusters in the same way as in \cref{sec:12}, 
now with the renaming of $M$ to $M_1$ to match its paired daughter cluster $D_1$, 
as well as the addition of a second pair of daughter and mother clusters, $D_2$ and $M_2$. 
As previously, we will use the set $\{ M_1 = g \}$ as the Poincar\'{e} section. 
We define the initial position of $M_2$ as $w$ for the \ac{PCS}. 
Thus, if a \ac{PCS} exists, it has the initial condition 
$(D_0(0) , D_1(0) , D_2(0) , M_1(0) , M_2(0)) = (0 , g , w , g , w)$ 
for some $g$ and $w$ such that $0 < g < w < 1$, as seen in \cref{fig:23_diag}.
Throughout this section on the 2:3 \ac{PCS}, we will have that
$\alpha = 1 + \rho (1/5)$, as the lone daughter cluster, which will be in $S$, 
consists of $1/5$ of the total cells.

As in \cref{sec:12}, 
we will examine the order $\bf{r_1 s_1 r_2 s_2 1}$ 
and prove the existence of a 2:3 \ac{PCS} with this order that is asymptotically stable.


\subsection{The Order of Events \texorpdfstring{$\bf{r_1 s_1 r_2 s_2 1}$}{r_1 s_1 r_2 s_2 1}}

\begin{figure}[h!]
\begin{center}  
\scalebox{.9}{
\begin{tikzpicture}[scale=0.85]
\begin{scope} [xshift=2cm]
\definecolor{qqqqff}{rgb}{0.,0.,1.}
\definecolor{ffqqqq}{rgb}{1.,0.,0.}
\draw (0,0) circle (1.25cm);
\foreach \angle / \label in 
{ 90/$1 \sim 0$,-20 /$g$ , 200/$w$, -40/$r_1$ ,-80/$r_2$ , 60/$s_1$, 0/$s_2$}
{
\draw[line width=1pt] (\angle:1.5cm) -- (\angle:1.25cm);
\draw (\angle:1.9cm) node{\textsf{\label}};
}
\foreach \angle in {-60}
\draw[line width=1pt] (\angle:1.6cm) -- (\angle:1.6cm);

\shade[ball color=qqqqff] (-10:1.1) circle (1.5mm);
\shade[ball color=ffqqqq] (210:1.4) circle (1.5mm);
\shade[ball color=qqqqff] (210:1.1) circle (1.5mm);
\shade[ball color=ffqqqq] (100:1.4) circle (1.5mm);
\shade[ball color=qqqqff] (100:1.1) circle (1.5mm);
\draw[thick][red,->] (-1.75,2) arc (120:60:3.5cm) ;
\end{scope}

\begin{scope} [xshift=-5.5cm]
\definecolor{qqqqff}{rgb}{0.,0.,1.}
\definecolor{ffqqqq}{rgb}{1.,0.,0.}
\draw (0,0) circle (1.25cm);
\foreach \angle / \label in 
{ 90/$1 \sim 0$,-20 /$g$ , 200/$w$, -40/$r_1$ ,-80/$r_2$ , 60/$s_1$, 0/$s_2$}
{
\draw[line width=1pt] (\angle:1.5cm) -- (\angle:1.25cm);
\draw (\angle:1.9cm) node{\textsf{\label}};
}
\foreach \angle in {-60}
\draw[line width=1pt] (\angle:1.6cm) -- (\angle:1.6cm);
\shade[ball color=ffqqqq] (-20:1.4) circle (1.5mm);
\shade[ball color=ffqqqq] (200:1.4) circle (1.5mm);
\shade[ball color=qqqqff] (-20:1.1) circle (1.5mm);
\shade[ball color=qqqqff] (90:1.1) circle (1.5mm);
\shade[ball color=qqqqff] (200:1.1) circle (1.5mm);
\draw[thick][red,->] (-1.75,2) arc (120:60:3.5cm) ;
\end{scope}

\draw[thick][blue,->] (-3,0) --node[above]{$\bf{C}_{\bf f}$,$\bf{C}_{\bf s}$ ~ }(0,0);

\begin{scope} [xshift=-2cm,yshift=-5.9cm]
\definecolor{qqqqff}{rgb}{0.,0.,1.}
\definecolor{ffqqqq}{rgb}{1.,0.,0.}
\draw (0,0) circle (1.25cm);
\foreach \angle / \label in 
{ 90/$1 \sim 0$,-20 /$g$ , 200/$w$, -40/$r_1$ ,-80/$r_2$ , 60/$s_1$, 0/$s_2$}
{
\draw[line width=1pt] (\angle:1.5cm) -- (\angle:1.25cm);
\draw (\angle:1.9cm) node{\textsf{\label}};
}
\foreach \angle in {-60}
\draw[line width=1pt] (\angle:1.6cm) -- (\angle:1.6cm);
\shade[ball color=ffqqqq] (-20:1.4) circle (1.5mm);
\shade[ball color=ffqqqq] (200:1.4) circle (1.5mm);
\shade[ball color=qqqqff] (-20:1.1) circle (1.5mm);
\shade[ball color=qqqqff] (90:1.1) circle (1.5mm);
\shade[ball color=qqqqff] (200:1.1) circle (1.5mm);

\draw[thick][red,->] (-1.75,2) arc (120:60:3.5cm) ;
\end{scope}
\draw[thick][blue,->] (3.5,-1.5) --node[above]{ ~ $\bf{C}_{\bf s}$}(5.2,-3.2);
\draw[thick][blue,->] (.5,-1.5) --node[above]{$\bf{C}_{\bf f}$ ~ }(-1.2,-3.1);

\begin{scope}[xshift=6cm,yshift=-5.9cm]
\definecolor{qqqqff}{rgb}{0.,0.,1.}
\definecolor{ffqqqq}{rgb}{1.,0.,0.}
\draw (0,0) circle (1.25cm);
\foreach \angle / \label in 
{ 90/$1 \sim 0$,-20 /$g$ , 200/$w$, -40/$r_1$ ,-80/$r_2$ , 60/$s_1$, 0/$s_2$}
{
\draw[line width=1pt] (\angle:1.5cm) -- (\angle:1.25cm);
\draw (\angle:1.9cm) node{\textsf{\label}};
}
\foreach \angle in {-60}
\draw[line width=1pt] (\angle:1.6cm) -- (\angle:1.6cm);

\shade[ball color=ffqqqq] (-20:1.1) circle (1.5mm);
\shade[ball color=qqqqff] (-20:1.4) circle (1.5mm);
\shade[ball color=qqqqff] (200:1.1) circle (1.5mm);
\shade[ball color=qqqqff] (90:1.4) circle (1.5mm);
\shade[ball color=ffqqqq] (200:1.4) circle (1.5mm);
\draw[thick][red,->] (-1.75,2) arc (120:60:3.5cm);
\end{scope}
\end{tikzpicture}
}
\captionof {figure}{A diagram of dynamics at  division for a 2:3 \ac{PCS} with both clustered models. The upper left diagram represents the initial conditions of the 2:3 \ac{PCS} with parameter choices reflecting the order of events $\bf{r_1s_1r_2s_21}$.
The red spheres represent clusters that start as mothers and blue spheres represent clusters that begin as daughters.
Outer spheres represent clusters that are currently mothers and inner spheres represent current daughter clusters. 
For the fixed model, the cells that began as mothers remain as such and go to $g$ at the time of division. 
For the switching model, the cells that began as daughters become mothers and travel to $g$ when they reach $1$.}
\label{fig:23_diag}
\end{center}
\end{figure}

For the order $\bf{r_1 s_1 r_2 s_2 1}$ with $s_2 < g < r_1$ and $r_2 < w < 1$, 
the point $r_1$ is reached by the pair $(D_1, M_1)$, 
then $D_0$ reaches $s_1$, 
then the pair $(D_1, M_1)$ reaches $r_2$, 
followed by $D_0$ reaching $s_2$ and finally the pair $(D_2, M_2)$ reaching 1. 
We will now show that explicit formulas for $g$ where $s_2 < g < r_1$ and $w$ where $r_2 < w < 1$ can be derived given a set of parameters $\{ s_1, s_2, r_1, r_2 \}$.

\begin{lemma}\label{lem:23gw}
If a mode 2:3 \ac{PCS} follows the order $\bf{r_1 s_1 r_2 s_2 1}$ with positive feedback and $s_2 < g < r_1$, then $g$ and $w$  must be equal to
\begin{align}
&\boxed{g = \frac {\alpha + (1 - \alpha) (r_2 - s_1)} {1 + 2 \alpha}},\label{eqn:23g}\\ 
&\boxed{w = 1 - g}.\label{eqn:23w} 
\end{align}
\end{lemma}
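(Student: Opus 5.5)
The plan is to follow the proof of \cref{lem:12g}: track the exact piecewise-linear trajectories of the clusters starting from the \ac{PCS} initial condition
\[
(D_0(0),D_1(0),D_2(0),M_1(0),M_2(0))=(0,g,w,g,w)
\]
up to the first division time $t^*$, and then impose the perfect-clustering matching conditions. By \cref{perfect}, and by the picture in \cref{fig:23_diag}, at $t^*$ the pair $(D_2,M_2)$ reaches $1$ and $M_2$ is sent to $g$. The configuration must then be a relabelled copy of the initial one. This means the lone daughter $D_0$ must sit at $g$, so that it is synchronized with the new mother. It also means the pair $(D_1,M_1)$ must sit at $w$, to take over the role of $(D_2,M_2)$.

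The first step is the pair $(D_2,M_2)$. Since $r_2<w<1$, this pair never enters $R$ during $[0,t^*]$, so it moves at rate $1$ and $t^*=1-w$. The lone daughter $D_0$ is never in $R$ either (it starts at $0<s_1<r_1$ and ends near $g<r_1$), so $D_0(t)=t$. The condition $D_0(t^*)=g$ then gives $1-w=g$ immediately, which is \cref{eqn:23w}, and also $t^*=g$.

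The second step is the pair $(D_1,M_1)$, using the event times exactly as in \cref{lem:12g}, with $\alpha=1+\rho(1/5)$ because only $D_0$ (one fifth of the cells) is ever in $S$:
\begin{itemize}
\item The pair reaches $r_1$ at $t_1=r_1-g$ with rate $1$.
\item It keeps rate $1$ until $D_0$ enters $S$ at $t_2=s_1$.
\item It then moves at rate $\alpha$ until it reaches $r_2$ at $t_3=s_1+(r_2-s_1-g)/\alpha$.
\item After that it moves at rate $1$ again. The event $s_2$ at $t_4=s_2$ changes no speeds, since the pair has already left $R$.
\end{itemize}
Hence $D_1(t^*)=M_1(t^*)=r_2+(g-t_3)$. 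Imposing $D_1(t^*)=w=1-g$ gives the linear equation
\[
1-g=r_2-s_1+g-\frac{r_2-s_1-g}{\alpha}.
\]
Multiplying by $\alpha$ and collecting the $g$ terms yields $g(1+2\alpha)=\alpha+(1-\alpha)(r_2-s_1)$, which is \cref{eqn:23g}.

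The algebra is routine. The step that needs the most care is justifying the rate assignments on each interval, that is, checking that the assumed order $\mathbf{r_1s_1r_2s_21}$ together with $s_2<g<r_1<r_2<w$ really forces the feedback structure used. Concretely, three things must hold:
\begin{itemize}
\item No cluster other than $D_0$ is ever in $S$, so $I\in\{0,1/5\}$. In particular $(D_2,M_2)$ stays in $(r_2,1)$, $(D_1,M_1)$ stays in $[g,w]$, and neither meets $S$.
\item Feedback is active exactly on $[s_1,t_3]$.
\item The pair $(D_2,M_2)$ hits $1$ before any other event repeats.
\end{itemize}
I would state these checks explicitly but not try to prove here that they are achievable; that belongs in a companion lemma analogous to \cref{lem:12anypos}.
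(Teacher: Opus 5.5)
Your proposal is correct and takes essentially the same route as the paper. Both track the piecewise-linear trajectories from $(0,g,w,g,w)$, obtain $w=1-g$ from $t^*=1-w$ and $D_0(t^*)=t^*=g$, and impose $D_1(t^*)=w$ after the feedback interval $[s_1,t_3]$ to get a linear equation for $g$. Your one implicit step, that $(D_1,M_1)$ must land exactly at $w$, follows from the synchronization condition of \cref{perfect} at the next division: the new lone daughter starting at $0$ must reach $g$ exactly when $(D_1,M_1)$ reaches $1$.
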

This lemma can be proven in the same way as \cref{lem:12g}, where we assume the conditions for a \ac{PCS} have been met and calculate the time at which $D_0$ reaches $g$, then solve for $g$.
The equation for $w$ follows directly from this by substitution. 
The full proof of this lemma can be found in the Supplementary Materials.

The following shows that the conditions required for the proof of \cref{lem:23gw} can be achieved.

\begin{lemma}\label{lem:23existence}
For any positive feedback, there exists a non-empty set of parameters \\
$\{s_1, s_2, r_1, r_2\}$ for which the order $\bf{r_1 s_1 r_2 s_2 1}$ can be followed by a 2:3 \ac{PCS}.
\end{lemma}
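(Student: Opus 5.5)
The plan follows the proof of \cref{lem:12anypos}. We choose the free quantities $s_1$ and $\delta = r_2 - s_1$. We then read off $g$ and $w$ from \cref{eqn:23g,eqn:23w}. Finally we place $r_1$ and $s_2$ in explicit nonempty intervals.

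First I would record what the order $\mathbf{r_1 s_1 r_2 s_2 1}$ demands of the 2:3 \ac{PCS} starting from $(0,g,w,g,w)$. The event times, computed as in the proof of \cref{lem:12g}, are:
\begin{itemize}
\item $t_1 = r_1 - g$, when $(D_1,M_1)$ hits $r_1$;
\item $t_2 = s_1$, when $D_0$ hits $s_1$;
\item $t_3 = \frac{r_2-s_1-g}{\alpha}+s_1$, when $(D_1,M_1)$ hits $r_2$;
\item $t_4 = s_2$, when $D_0$ hits $s_2$;
\item $t^* = 1-w = g$, when $(D_2,M_2)$ hits $1$.
\end{itemize}
The constraints to verify are:
\begin{enumerate}
\item the parameter order $0<s_1<s_2<g<r_1<r_2<w<1$;
\item the time order $0<r_1-g<s_1<t_3<s_2<g$.
\end{enumerate}
The new ingredient compared with the 1:2 case is $r_2<w=1-g$. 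It guarantees that $(D_2,M_2)$ lies beyond $R$ and simply drifts at rate $1$ to division. We also need $g<r_2$, and in fact I would aim for $g<\delta$.

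Next, write $s_1=\varepsilon$ and $r_2=\delta+\varepsilon$, so that
\[
g=\frac{\delta+\alpha(1-\delta)}{1+2\alpha},
\]
which is affine and decreasing in $\delta$ since $\alpha>1$. I would reduce each constraint to a condition on $\delta$, up to $O(\varepsilon)$ terms.
\begin{itemize}
\item The condition $g<\delta$ is equivalent to $\delta>1/3$.
\item The condition $r_2<1-g$ reads $\delta(2+\alpha)<1+\alpha-(1+2\alpha)\varepsilon$.
\item The condition $t_3<g$, i.e.\ $\delta+\alpha s_1<(1+\alpha)g$, reads $(\delta+\alpha\varepsilon)(1+2\alpha)<(1+\alpha)(\delta+\alpha(1-\delta))$. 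At $\varepsilon=0$ this simplifies to $\delta<\frac{1+\alpha}{2+\alpha}$.
\end{itemize}
So I would fix $\delta\in\left(\tfrac13,\tfrac{1+\alpha}{2+\alpha}\right)$. This interval is nonempty because $\frac{1+\alpha}{2+\alpha}>\frac12$ for every $\alpha>1$, so the construction works for any positive feedback. Then, by strict inequality and continuity, I would take $\varepsilon>0$ small enough that all three strict inequalities hold, together with $\varepsilon<g$.

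Finally I would choose the remaining two parameters:
\begin{itemize}
\item $r_1\in(g,\min(g+s_1,r_2))$, which gives $0<r_1-g<s_1$ and $g<r_1<r_2$;
\item $s_2\in(t_3,g)$, which is nonempty by the previous step.
\end{itemize}
Since $t_3>s_1$ (as $r_2-s_1=\delta>g$), we get $s_1<s_2<g$. Collecting everything gives $0<s_1<s_2<g<r_1<r_2<w<1$ together with the time ordering, which is the claim.

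The main obstacle is that the new constraint $r_2<1-g$ pushes $\delta$ down while $t_3<g$ also bounds it above. In the 1:2 case one simply took $\delta>\tfrac12$, but that choice may fail here. I would handle this by doing the exact algebra above, which shows both constraints collapse to the same threshold $\frac{1+\alpha}{2+\alpha}$ as $\varepsilon\to0$, and then checking carefully that the window above $1/3$ survives the $O(\varepsilon)$ corrections.
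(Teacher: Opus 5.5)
Your proposal is correct and follows essentially the paper's route. As in the proof of \cref{lem:12anypos}, you fix $s_1$ and $\delta = r_2 - s_1$, read off $g$ and $w = 1-g$ from \cref{eqn:23g,eqn:23w}, and place $r_1$ and $s_2$ in explicit nonempty intervals to verify the four listed conditions. One sharpening: the constraints $r_2 < w$ and $t_3 < g$ coincide exactly for every $s_1$, because the defining equation for $g$ gives $g - t_3 = w - r_2$. So your continuity step in $\varepsilon$ can be replaced by the exact bound $(1+2\alpha)s_1 < 1+\alpha-(2+\alpha)\delta$.
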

To prove this lemma, we must show that there exist $\{s_1, s_2, r_1, r_2\}$ so that:
\smallskip
\begin{enumerate}
    \item The parameters are in the order we assumed $0 < s_1 < s_2 < r_1 < r_2 < 1$.

    \item The times of events in the proof of \cref{lem:23gw} from the Supplementary Materials occur in the order assumed, \\
    $0 < t_1 < t_2 < t_3 < t_4 < t^*$, which means 
    $\displaystyle 0 < r_1 - g < s_1 < \frac {r_2 - s_1 - g} {\alpha} + s_1 < s_2 < 1 - w = g$.

    \item The $g$ in \cref{eqn:23g} is after $S$ and before $R$, as we assumed $s_2 < g < r_1$.

    \item The $w$ in \cref{eqn:23w} is after $R$ and before 1.
\end{enumerate}
\smallskip 
This proof can be completed using algebraic manipulation to provide ranges for each parameter for which a 2:3 \ac{PCS} can be followed. We have included the proof in the Supplementary Materials.


\subsection{Derivation of Maps for Mode 2:3 Solutions with Order \texorpdfstring{$\bf{r_1 s_1 r_2 s_2 1}$}{r_1 s_1 r_2 s_2 1}}

Let $\Xb(t) = (D_0(t), D_1(t), D_2(t), M_1(t), M_2(t))$
be a 2:3 clustered solution near a 2:3 \ac{PCS} such that
\[
\begin{aligned}
    D_0(0) &= d_0,
    \\
    D_1(0) &= d_1,
    & \hspace{3cm}
    M_1(0) &= g,
    \\
    D_2(0) &= d_2,
    &
    M_2(0) &= m_2.
\end{aligned}
\]
where $d_0 \in (-\delta_0 , \delta_0)$, $d_1 \in (g - \delta_1, g + \delta_1)$ 
and $d_2, m_2 \in (w - \delta_2, w + \delta_2)$ 
for small positive $\delta_0, \delta_1, \delta_2$.
We assume $\Xb(0)$ is event close to the \ac{PCS}.
Much like in the derivation for the mode 1:2, 
we will find that the signs of $d_1 - g$ and $d_2 - m_2$ do not impact our results.

We will forgo the explicit derivation which follows the same steps as the calculations done in \cref{sec:12}. 
These calculations can be found in the Supplementary Materials for optional reference.

The resulting $DF$ matrices for the fixed and switching models are as follows:

\begin{equation}
DF_{\bf f} =
\begin{pmatrix}
    0 & 0 & 1 & -1 \\[8pt]
    1 & 0 & 0 & -1 \\[8pt]
    \dfrac{\alpha-1}{\alpha} & \dfrac{1}{\alpha} & 0 & -1 \\[8pt]
\dfrac{\alpha-1}{\alpha} & 0 & 0 & -1
\end{pmatrix}, 
\quad
DF_{\bf s} =
\begin{pmatrix}
    0 & 0 & -1 & 1 \\[8pt]
    1 & 0 & -1 & 0 \\[8pt]
    \dfrac{\alpha-1}{\alpha} & \dfrac{1}{\alpha} & -1 & 0 \\[8pt]
    \dfrac{\alpha-1}{\alpha} & 0 & -1 & 0
\end{pmatrix}.
\label{matrix:23DF}
\end{equation}


\subsection{Stability for a 2:3 PCS with order \texorpdfstring{$\bf{r_1 s_1 r_2 s_2 1}$}{r_1 s_1 r_2 s_2 1}}

\begin{lemma}\label{lem:23stab}
    For the 2:3 \ac{PCS} with the order $\mathbf{r_1 s_1 r_2 s_2 1}$ under positive feedback with $\alpha_{1/5} > 1$ in both clustered models $\bf C_f$ and $\bf C_s$, 
    all of the eigenvalues of the Jacobian matrices corresponding to the respective Poincaré maps are in the interior of the unit disk.
\end{lemma}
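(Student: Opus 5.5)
The plan is to compute the characteristic polynomials of the two matrices in \cref{matrix:23DF} explicitly and then verify, for every $\alpha>1$, that all roots lie strictly inside the unit disk. As in \cref{lem:12stab}, by the chain rule the spectrum of the Poincar\'e map $P_\alpha$ is governed by that of $F_\alpha$. So it suffices to treat $DF_{\bf f}$ and $DF_{\bf s}$.

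\textbf{Characteristic polynomials.} For $DF_{\bf f}$, write $a=(\alpha-1)/\alpha$ and $b=1/\alpha$, so that $1-a=b$. I would expand $\det(\lambda I-DF_{\bf f})$ along the second column, which has only two nonzero entries ($\lambda$ and $-b$). The two $3\times 3$ cofactors are $\lambda^3+\lambda^2$ and $\lambda+1-a$. Using $1-a=b$, this gives
\[
p_{\bf f}(\lambda)=\lambda^4+\lambda^3-\frac{1}{\alpha}\lambda-\frac{1}{\alpha^2}.
\]
The same cofactor expansion applied to $DF_{\bf s}$, whose last two columns are those of $DF_{\bf f}$ interchanged up to sign, yields a quartic $p_{\bf s}$ with coefficients in $1/\alpha$. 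I expect it to resemble $p_{\bf f}$ with some signs flipped, and I would carry out that computation next.

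\textbf{Why \cref{thrm:Anderson} is not enough.} A useful sanity check is the limit $\alpha\to1$. There $p_{\bf f}=(\lambda+1)(\lambda^3-1)$, so all roots sit on the unit circle. This is consistent with the absence of feedback giving only neutral stability, and it shows the estimate must use $\alpha>1$ sharply. Because $p_{\bf f}$ has coefficients of mixed sign (and so does $p_{\bf f}(-\lambda)$), \cref{thrm:Anderson} and \cref{cor:AndersonCor1} cannot be applied directly. I would first check whether $p_{\bf s}$ happens to have all positive coefficients. If so, the Anderson bound with $\beta$ equal to the largest ratio $a_i/a_{i+1}$ may already give $|\lambda|<1$.

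\textbf{Main step: a Schur--Cohn/Jury test.} For any polynomial not covered by Anderson, I would apply the Jury stability criterion for quartics. The conditions are: $p(1)>0$; $p(-1)>0$ (degree even); $|a_0|<a_4$; and the two inequalities on the first and second Jury determinants. For $p_{\bf f}$, the simple conditions are $p(1)=2-\frac1\alpha-\frac1{\alpha^2}>0$, $p(-1)=\frac1\alpha-\frac1{\alpha^2}>0$ and $\frac1{\alpha^2}<1$, and each holds exactly when $\alpha>1$. The remaining reduced-table inequalities are polynomial inequalities in $x=1/\alpha\in(0,1)$. I would factor each as a product of powers of $(1-x)$ times a factor that is visibly positive on $(0,1)$.

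I expect this algebraic verification of the higher Jury conditions, done uniformly in $\alpha$ for both $p_{\bf f}$ and $p_{\bf s}$, to be the main obstacle. If the Jury inequalities prove messy, my fallback is Rouché's theorem. For $p_{\bf f}$ one would compare $\lambda^3(\lambda+1)$ with $x\lambda+x^2$ on a circle of radius slightly less than $1$, treating a neighborhood of $\lambda=-1$ separately via the factorization at $\alpha=1$ and a perturbation argument for the root near $-1$. Combined with $p(\pm1)>0$, this would place all four roots in the open unit disk.
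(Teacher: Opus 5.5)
Your plan is sound and would prove the lemma. For $\mathbf{C_f}$, however, it is much heavier than the paper's argument. The paper observes that
\[
\lambda^4+\lambda^3-\tfrac{1}{\alpha}\lambda-\tfrac{1}{\alpha^2}=\Bigl(\lambda^2-\tfrac{1}{\alpha}\Bigr)\Bigl(\lambda^2+\lambda+\tfrac{1}{\alpha}\Bigr).
\]
So the eigenvalues are $\pm\alpha^{-1/2}$ together with the roots of a quadratic whose product is $1/\alpha$ and whose sum is $-1$. Those roots have modulus $\alpha^{-1/2}$ when non-real ($1<\alpha<4$) and lie in $(-1,0)$ when real ($\alpha\ge 4$). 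This gives the eigenvalues explicitly, and hence contraction rates, with no stability criterion needed.

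Your Jury route also goes through, in exactly the form you predicted. With $x=1/\alpha$, the condition $|b_0|>|b_3|$ reads $1-x^4>x(1-x)$, i.e.\ $(1+x)(1+x^2)>x$. The second condition gives $|c_0|-|c_2|=(1-x)^2(1+x+x^2+3x^3+3x^4+2x^5+x^6)>0$. Its advantage is that it is mechanical and does not depend on spotting a factorization. The Rouch\'e fallback is unnecessary. As stated it is also incomplete, since $p(\pm1)>0$ does not rule out non-real roots on $|\lambda|=1$.

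For $\mathbf{C_s}$, your guess about the polynomial is off. The same column-two expansion gives
\[
p_{\mathbf s}(\lambda)=\lambda^4+\lambda^3+\tfrac{1}{\alpha}\lambda^2+\tfrac{1}{\alpha}\lambda+\tfrac{1}{\alpha^2}.
\]
This has a nonzero $\lambda^2$ term and all coefficients positive, so your Anderson branch applies, and it coincides with the paper's argument.

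Be careful, though: \cref{thrm:Anderson} alone does not give $|\lambda|<1$. The ratios $a_i/a_{i+1}$ are $1/\alpha,1,1/\alpha,1$, so $\beta=1$ and you only get $|\lambda|\le 1$. Strictness comes from \cref{cor:AndersonCor1}, because $a_0/a_1=1/\alpha<1=\beta$. Concretely, $(1-z)p_{\mathbf s}(z)=x^2+x(1-x)z+(1-x)z^3-z^5$. On $|z|=1$ a zero would force all terms on the right to be aligned with $x^2>0$, hence $z=1$, where $p_{\mathbf s}(1)>0$. Your Jury test would also work here: $|c_0|-|c_2|=(1-x)^2(1+2x+x^2+3x^3+x^4+x^5+x^6)$.
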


\begin{proof}
    From \cref{matrix:23DF} we get the Jacobian of the $F$ map for the fixed model $\bf C_f$, whose characteristic polynomial is 
    \[
        f(\lambda) = \lambda^4+\lambda^3-\frac{\lambda}{\alpha}-\frac{1}{\alpha^2} = (\lambda^2 -\frac{1}{\alpha})(\lambda^2 + \lambda + \frac{1}{\alpha}).
    \]
    The first factor's roots, $\sqrt{1/\alpha}$, are in the unit disc. 
    The second factor has roots
    \[
        \lambda = \frac{-1 \pm \sqrt{1-\frac{4}{\alpha}}}{2}.
    \]
    We now note that $\lambda$ is real when $\alpha \geq 4$ and is complex for $1 < \alpha < 4$. For the complex case, we can write
    \[
     \lambda = \frac{-1}{2} \pm \frac{i}{2} \sqrt{\frac{4}{\alpha} - 1}.
    \]
Taking the modulus, we find
\[
| \lambda | = \sqrt{\frac{1}{4} + \frac{1}{4} \left( \frac{4}{\alpha} - 1 \right)}
=
\sqrt{\frac{1}{\alpha}}
< 1 \quad \forall \ \alpha > 1.
\]
For $\alpha \geq 4$ where $\lambda$ is real, we write
\[
| \lambda | = \frac{1}{2} + \frac{1}{2} \sqrt{1 - \frac{4}{\alpha}},
\]
and since $1 - \frac{4}{\alpha} < 1$ for all $\alpha \geq 4$, we have that $\lambda$ is in the unit disk for all $\alpha > 1$.
    
    In the case of the switching model, $\bf C_s$, we get the characteristic polynomial
    \[
        s(\lambda) = \lambda^4 + \lambda^3 + \frac{1}{\alpha}\lambda^2 + \frac{1}{\alpha}\lambda + \frac{1}{\alpha^2}
    \]
    which has ratios of coefficients of
    \[
        \left\{\frac{a_i}{a_{i+1}}\right\}_{i=0}^3 = \left\{ \frac{1}{\alpha}, 1 , \frac{1}{\alpha}, 1\right\}. 
    \]
    From this, we have
    \[
        a = \min\{a_i/a_{i+1}\} = \frac{1}{\alpha}, \quad \text{and} \quad b = \max\{a_i/a_{i+1}\} = 1.
    \]
    By \cref{thrm:Anderson}, all of the roots to the characteristic equation are in the interval $[a,b]$. 
    However, by \cref{cor:AndersonCor1}, since
    \[
        b = 1 > \frac{a_0}{a_1} = \frac{1}{\alpha},
    \]
    all of the roots of the characteristic equation for the switching model are strictly less than 1 in modulus, as they fall in the interval 
    $[ \, 1 / \alpha \ , 1)$.
\end{proof}

Combining \cref{lem:23gw}, \cref{lem:23existence} and \cref{lem:23stab}, we obtain the following result:

\begin{theorem}
    In the models $\bf C_f$ and $\bf C_s$ under positive feedback, there exists a $2:3$ \ac{PCS} with order $\bf r_1 s_1 r_2 s_2 1$ that is asymptotically stable.
\end{theorem}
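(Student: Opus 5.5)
The plan is to assemble the theorem from the three preceding results, in the same way as the $1:2$ theorem.

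First, I invoke \cref{lem:23existence} to fix parameters $0<s_1<s_2<r_1<r_2<1$ for which the event order $\bf r_1 s_1 r_2 s_2 1$ is consistent. With these parameters I set $g$ and $w$ by \cref{eqn:23g} and \cref{eqn:23w}. By \cref{lem:23gw} and its proof, the configuration
\[
(D_0,D_1,D_2,M_1,M_2)(0)=(0,g,w,g,w)
\]
then returns to itself in both $\bf C_f$ and $\bf C_s$. It is therefore a $2:3$ \ac{PCS} with that order of events. I also check that $s_2<g<r_1$ and $r_2<w<1$, which are items 3 and 4 of \cref{lem:23existence}. These guarantee that the assumed region memberships, and hence the speeds $1$ and $\alpha=1+\rho(1/5)$ used in the derivation, are the correct ones.

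Next, I use the standing assumption that no two events coincide, together with the definition of event close initial conditions. Together these give a neighborhood $U\subset\Sigma$ of the \ac{PCS} initial condition on which every solution has the same order of events. On $U$ the maps $F_{\bf f}$ and $F_{\bf s}$ are therefore affine. Their Jacobians are the constant matrices in \cref{matrix:23DF}, and this holds for either sign of $d_1-g$ and $d_2-m_2$, as noted in the derivation.

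Then I relate $F_\alpha$ to the Poincar\'e map. In $\bf C_f$, $P_{\bf f}$ is an iterate of $F_{\bf f}$, namely $\operatorname{lcm}(2,3)=6$ iterates up to the coordinate identification. In $\bf C_s$, $P_{\bf s}=F_{\bf s}^{5}$. By the chain rule, $DP_\alpha=(DF_\alpha)^k$, so the spectral radius of $DP_\alpha$ is the $k$-th power of that of $DF_\alpha$. \cref{lem:23stab} gives spectral radius less than $1$ for both models. Hence every eigenvalue of $DP_\alpha$ lies strictly inside the unit disk. Since $P_\alpha$ is affine near the fixed point, the \ac{PCS} is locally asymptotically stable.

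The main obstacle is justifying the identification of $F_\alpha$ as a factor, or iterate-root, of $P_\alpha$ on a common neighborhood. The key point is that each application of $F_\alpha$ maps $U$ back into an event close neighborhood, so the same affine formula applies at every iterate. I would handle this by shrinking $U$: the fixed point of $P_\alpha$ is interior to the region where the event order holds, and $F_\alpha$ is continuous there. I would also double-check the stability argument for $\bf C_f$ in \cref{lem:23stab}, since the real-root case $\alpha\ge4$ needs $|\lambda|<1$. This holds because $\tfrac12+\tfrac12\sqrt{1-4/\alpha}<1$.
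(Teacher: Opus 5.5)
Your proposal is correct and takes the same approach as the paper, whose proof of this theorem simply combines \cref{lem:23gw}, \cref{lem:23existence} and \cref{lem:23stab}. Your added details are the reasoning the paper sketches in its introduction and in \cref{sec:returnmaps}, made explicit: $P_{\bf f}=F_{\bf f}^{6}$ and $P_{\bf s}=F_{\bf s}^{5}$, hence $DP_\alpha=(DF_\alpha)^k$ by the chain rule, together with the check of the real-root case $\alpha\ge 4$.
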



\section{Discussion}

In previous studies of cell cycle populations models, 
it had been shown that negative feedback can lead to asymptotically stable temporally clustered solutions, 
under either symmetric or asymmetric division.  
Further, it had been shown that under symmetric division and positive feedback, 
only the synchronized solution can be asymptotically stable.
In this manuscript, we have demonstrated that {\em asymmetric division} and {\em positive feedback}
can lead to asymptotically stable temporal clustering.
In particular, we have proven the existence and asymptotic stability of 
$1:2$ and $2:3$  perfectly clustered solutions (PCSs). 

As in previous work, the {\em order of events}
plays an important role in the stability.
There are certain conditions on the parameters, related to the order of events, 
under which these \ac{PCS}s exist and are stable, 
but we have proven that these conditions can be satisfied.

Asymptotic stability implies {\em structural stability} \cite{perko2013}.  
This means that for an open set of parameter values that contains those that satisfy the conditions, 
we have qualitatively similar temporally clustered solutions that are also asymptotically stable.
Numerical simulations suggest that the existence and stability are indeed robust.

It follows from \cite{Greg} that stability in the clustered models ${\bf C_f}$ and
${\bf C_s}$ imply stability in the cell models ${\bf c_f}$ and ${\bf c_s}$, respectively. 
Note that we have not proven stability in the discontinuous ``Alternating Cell" model ${\bf c_a}$. 
It appears that producing such a proof would be challenging even though numerical simulations clearly suggest it.


\section*{Declarations}

All authors have made substantial intellectual contributions to the study conception, execution, and design of the work. 
All authors have read and approved the final manuscript. 
In addition, the following contributions occurred:
Conceptualization : Todd R. Young;  
Methodology \& Analytical Computations: All; 
Figure Creation: Graham Walther; 
Simulations: Brynley Needham; 
Funding acquisition \& Supervision: Todd R. Young.

{\bf Participating Investigators:}
Bishal Panthee helped with some initial investigations. 
Prof.~Martin J.~Mohlenkamp: 
participated in technical editing of the manuscript,
greatly improving it's quality and structure.

{\bf Conflicts of Interest:} The authors declare there are no conflicts of interest.

{\bf Data $\&$ Code Availability Statement:} The MATLAB script used to generate simulations referenced in this work can be found in the Supplementary Materials.

{\bf Funding Statement:} 
Camden Kilton and Brynley Needham received funding from the Quantitative Biology Institute at Ohio University.


\bibliographystyle{alphaurl_etd}

\end{document}